\date{\today}
\newif\ifdraft
\date{DRAFT: \today}
\newcommand{\myrunningheads}
{\ifdraft
\pagestyle{myheadings}
\markboth
{DRAFT \today\quad \DTMcurrenttime\ EST} %\DTMcurrentzone}
{DRAFT \today\quad \DTMcurrenttime\ EST} %\DTMcurrentzone}
\else
\pagestyle{myheadings}
\markboth
{Waves in particle lattices with long-range forces}
{B. Ingimarson and R. L. Pego}
\fi
}
\newcommand{\nwc}{\newcommand}
\nwc{\fix}[1]{\textcolor{red}{[#1]}}
\nwc{\note}[1]{\textcolor{blue}{#1}}
\nwc{\oldnu}{\zeta}  % or \upsilon?
\nwc{\pow}{\alpha}
\nwc{\bshift}{b}  % was {\beta}
\nwc{\prate}{p}
\nwc{\vp}{\varphi}
\nwc{\intR}{\int_{-\infty}^\infty}
\nwc{\fpl}{\frac{\pi}{L}}
\nwc{\ict}{{\sigma}}
\nwc{\mict}{{\bar\sigma}}
\nwc{\R}{{\mathbb{R}}}
\nwc{\N}{{\mathbb{N}}}
\nwc{\Z}{{\mathbb{Z}}}
\nwc{\C}{{\mathbb{C}}}
\nwc{\D}{\partial}
\nwc{\eps}{\epsilon}
\nwc{\calL}{{\mathcal L}}
\nwc{\calF}{{\mathcal F}}
\DeclareMathOperator{\sgn}{sgn}
\DeclareMathOperator{\sinc}{sinc}
\DeclareMathOperator{\sech}{sech}
\DeclareMathOperator{\im}{Im}
\DeclareMathOperator{\re}{Re}
\DeclareMathOperator{\diag}{diag}
\nwc{\coloneq}{\colon=}
\nwc{\inv}{^{-1}}
\nwc{\dds}[1]{\frac{d#1}{ds}}
\nwc{\qua}{{\ \ }}
\newtheorem{theorem}{Theorem}%[section]
\newtheorem{corollary}[theorem]{Corollary}
\newtheorem{lemma}[theorem]{Lemma}
\newtheorem{proposition}[theorem]{Proposition}
\theoremstyle{remark}
\newtheorem{remark}{Remark}
\begin{document}

\title{On long waves and solitons in particle lattices with forces of infinite range}
\author{Benjamin Ingimarson\footnote{Email address: bwi@andrew.cmu.edu} }
\author{Robert L. Pego\footnote{Email address: rpego@cmu.edu} }

\affil{%
Department of Mathematical Sciences\\ 
Carnegie Mellon University\\
Pittsburgh, PA 15213.}
%}

%\date{September 24, 2023}
\date{December 15, 2023 (revised)}

\maketitle
\begin{abstract}
We study waves on infinite one-dimensional lattices of particles 
that each interact with all others through power-law forces $F \sim r^{-\beta}$. 
The inverse-cube case corresponds to Calogero-Moser systems which 
are well known to be completely integrable for any finite number of particles.
The formal long-wave limit for unidirectional waves in these
lattices is the Korteweg-de Vries equation if $\beta>4$, 
but with $2<\beta<4$ it is a nonlocal dispersive PDE that 
reduces to the Benjamin-Ono equation for $\beta=3$.
For the infinite Calogero-Moser lattice, we find explicit formulas
that describe solitary and periodic traveling waves.
\end{abstract}
 \medskip
\noindent
{\it Keywords: }{KdV limit, Calogero-Sutherland systems, B\"acklund transform}

\noindent
{\it Mathematics Subject Classification:} 37K60, 37K40, 70F45, 35Q51

\section{Introduction}

In this work we study wave motions in infinite lattices of 
particles that each interact with all the others through long-range 
power-law forces.  The particle positions $x_j$ are required to 
increase with $j$ and  evolve according to the equations 
\begin{equation}\label{e:sys1}
    \ddot x_j = 
    -\pow \sum_{m=1}^\infty  
    \Bigl(  (x_{j+m}-x_j)^{-\pow-1} -(x_j-x_{j-m})^{-\pow-1}  \Bigr)\,,
\end{equation}
where $\pow>1$. For $\pow=2$ this is an infinite-lattice
version of the  famous Calogero-Moser system \cite{Calogero.71,Moser.75} 
\begin{equation}\label{e:cm1}
    \ddot x_j = \sum_{k\ne j} \frac{2}{(x_j-x_k)^{3}}\,,
\end{equation}
which is well-known to be completely integrable 
and has been extensively investigated when the number of particles is finite.

Wave motions have been widely examined in infinite particle
lattices  with nonlinear {\em nearest-neighbor} forces, 
known as Fermi-Pasta-Ulam-Tsingou (FPUT) lattices.
Such lattices  typically admit a Korteweg-de Vries  scaling limit 
for the unidirectional propagation of long waves of small amplitude,
a fact that helped to trigger the great bounty of discoveries
in the theory of completely integrable systems 
that has emerged over the last half-century \cite{Zabusky.Kruskal.65}.

Also, FPUT lattices typically admit exact solitary wave solutions \cite{Toda-book,Friesecke.Wattis.94,Friesecke.Pego.99}. The form of these waves is known 
explicitly only in the case of the Toda lattice, which is completely integrable.
Recently Vainchtein \cite{Vainchtein.22} surveyed work on solitary waves 
in lattices, including lattices with next-nearest-neighbor or longer-range interactions.  
In particular, existence theorems for interactions of any finite range
were proved recently by 
Herrmann and Mikikits-Leitner \cite{Herrmann.Mikikits-Leitner.16} 
using a KdV approximation argument, and by 
Pankov \cite{Pankov.19} using variational methods.
The former authors mention that the approximation
argument should work for infinite-range interactions if their strength decays rapidly enough,
e.g., exponentially fast. 

%\fix{Add motivation: intrinsic interest in long-range forces. Work of Daraio.}
Strong motivation for considering lattice systems with power-law forces such as \eqref{e:sys1}
comes from experimental work on solitary waves in chains of repelling magnets
by Moler\'on~{\em et al.}~\cite{MoleronEA2014}. 
These authors mention that long-range dipole-dipole interactions between magnets
separated by a large distance $d$ involve repulsive forces proportional 
to  $d^{-4}$ in theory.  Over distances appropriate to their experiments, however, measurements 
better fit a force law proportional to $d^{-\beta}$ with $\beta\approx 2.73$.
The Calogero-Moser force law, with $\beta=3$, may be considered a reasonable approximation. 
And since such power-law forces have long range, it is interesting to consider the 
infinite-range limit represented by \eqref{e:sys1}.
Admittedly, the system~\eqref{e:sys1} is not a perfect model for the experiment setup of \cite{MoleronEA2014},
not only because dissipation is neglected, but
because a given magnet successively repels and attracts others along the chain due to the alternating orientation of north and south poles. 
Such forces can be treated as differences between 
forces from two systems of repulsive forces, though, and we will discuss this.  
Studying the system~\eqref{e:sys1} is clearly an important step anyway toward understanding 
more general systems with forces of infinite range.

\subsection*{Formal long-wave scaling limits}
As it turns out, a formal KdV limit is possible for the system \eqref{e:sys1} 
with power-law forces of infinite range, but only  when $\pow$ is sufficiently large, 
namely when $\beta=\pow+1\ge4$ as we show below.
When $2<\beta<4$, we find instead in Section~\ref{s:longwave} that a different scaling limit obtains,
with small long waves formally governed by a {\em nonlocal} dispersive PDE of the 
form %\fix{check signs \& scaling}
\begin{equation} \label{e:BObeta}
    \D_t u +  u \D_x u +  H|D|^\pow u = 0 \,.
\end{equation}
Here
$H$ is the Hilbert transform, and $|D|^\pow$ has Fourier symbol $|k|^\pow$,
thus the dispersion term $f=H|D|^\pow u$ has Fourier transform 
$\hat f(k)=(-i\sgn k)|k|^\pow \hat u(k)$.
For the case $\pow=2$ corresponding to the infinite Calogero-Moser lattice in particular,
\eqref{e:BObeta} is the {\em Benjamin-Ono} equation, in the form  
\begin{equation}
\D_t u + u \D_x u - H\D_x^2 u = 0 \,.
\end{equation}

There is a well-known link between the Calogero-Moser system and 
Benjamin-Ono equations through the pole dynamics of rational solutions 
\cite{Airault.McKean.ea.77,Choodnovsky.Choodnovsky.77,Case.78,Stone.Anduaga.ea.08}.
Also through pole dynamics, formal continuum limits of Calogero-Moser systems have been 
connected with coupled Benjamin-Ono-type equations 
in the physics literature \cite{Polychronakos.95,Stone.Anduaga.ea.08,Abanov.Bettelheim.ea.09}.
To our knowledge, however, the long-wave limit that we consider herein has 
not been previously described.

\subsection*{Formulae for Calogero-Moser waves}
The fact that dispersive PDE of the form in \eqref{e:BObeta} admit solitary wave 
solutions is a consequence of the analyses of  
Benjamin {\em et al.}~\cite{Benjamin.Bona.ea.90} and Weinstein \cite{Weinstein.87}.
For the long-range particle system \eqref{e:sys1}, 
a rigorous analysis of existence for solitary waves
is out of the scope of the present paper. 
It is plausible, though, that such an analysis 
could be performed by methods like those used for FPUT lattices 
and lattices with longer-range interactions, either of variational 
character \cite{Friesecke.Wattis.94,Pankov.19,Pego.Van.19} 
or of iterative/fixed-point character 
\cite{Friesecke.Pego.99,Herrmann.10,Herrmann.Mikikits-Leitner.16}.

At present, we focus  discussion of solitary and periodic traveling waves 
to the special case of  the infinite Calogero-Moser lattice. 
Waves traveling to the right in such a lattice are solutions with the property
that after some time delay $\tau>0$, the configuration of the lattice
recurs with an index shift and a spatial shift $h>0$, so that
\begin{equation}\label{e:wavesymmetry}
    x_{j+1}(t+\tau)= x_j(t) + h %\quad\text{for all $j$ and $t$.}
\end{equation}
for all $j$ and $t$. 
This means that traveling waves can be expressed in the form
\begin{equation}\label{e:xvp0}
    x_j(t) = jh - \vp(jh-ct), 
\end{equation}
where $c=h/\tau$ and $-\vp(-ct)=x_0(t)$ for all $t$.
Moreover, by the scaling $x_j\mapsto h x_j$, $t\mapsto h^2t$ which leaves \eqref{e:cm1} 
invariant, and a choice of origin for space and time, we can suppose $h=1$ and $x_0(0)=0$.

By making use of B\"acklund transforms for 
Calogero-Moser-Sutherland systems (see \cite{Wojciechowski.82,Wojciechowski.83} 
and also \cite{Abanov.Bettelheim.ea.09,Stone.Anduaga.ea.08,Philip.2019}), 
we have managed to derive striking explicit formulas that determine
both solitary waves and periodic waves for Calogero-Moser lattices.
\begin{theorem}[Solitary waves]\label{t:soliton}
    For each wave speed $c$ satisfying $c^2>\pi^2$, the infinite Calogero-Moser lattice
    admits a solitary wave solution of the form 
\begin{equation}\label{e:xvp}
    x_j(t) = j - \vp(j-ct), 
\end{equation}
where $\vp=\vp(s)$ increases from $\vp(-\infty)=-\frac12$ to $\vp(+\infty)=\frac12$
and is determined by the relation
\begin{equation} \label{e:yvp}
    (c^2-\pi^2)(s - \vp) = \pi\tan \pi \vp \,.
    \end{equation}
\end{theorem}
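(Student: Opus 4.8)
\emph{Proof strategy.} The plan is to argue in three stages. First, confirm that \eqref{e:yvp} genuinely defines a smooth increasing profile $\vp\colon\R\to(-\tfrac12,\tfrac12)$ with $\vp(\mp\infty)=\mp\tfrac12$: writing $\beta:=c^2-\pi^2$, which is positive by hypothesis, \eqref{e:yvp} is the relation $s=G(\vp)$ with $G(v):=v+\tfrac{\pi}{\beta}\tan\pi v$, and since $\beta>0$ one has $G'(v)=1+\tfrac{\pi^2}{\beta}\sec^2\pi v>0$ on $(-\tfrac12,\tfrac12)$ while $G(v)\to\pm\infty$ as $v\to\pm\tfrac12$; hence $G$ is a smooth increasing bijection of $(-\tfrac12,\tfrac12)$ onto $\R$ and $\vp:=G^{-1}$ has exactly the asserted properties. (This is the only point where $c^2>\pi^2$ is used; for $c^2<\pi^2$ the same formula yields a decreasing profile.)

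Second, recast the ansatz \eqref{e:xvp} as a statement about zeros. With $s_j:=j-ct$ and $\vp_j:=\vp(s_j)$ one has $x_j=j-\vp_j$ and $x_j-ct=s_j-\vp_j$, so \eqref{e:yvp} becomes $\beta(x_j-ct)=\pi\tan\pi\vp_j=\pi\tan\pi(j-x_j)=-\pi\tan\pi x_j$, using $j\in\Z$ and the $\pi$-periodicity of $\tan$. Thus at each $t$ the numbers $x_j(t)$ are the zeros of $\Phi(x,t):=\beta(x-ct)+\pi\tan\pi x$, and since $\Phi(\cdot,t)$ increases strictly from $-\infty$ to $+\infty$ on each interval $(j-\tfrac12,j+\tfrac12)$, the point $x_j(t)$ is precisely the unique zero there, with $\{x_j(t)\}_{j\in\Z}$ the full real zero set; the implicit function theorem moreover makes $x_j(t)$ smooth in $t$. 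Clearing the poles by multiplying by $\pi^{-1}\cos\pi x$, the same configuration is the zero set of the entire function
\[
\Theta(x,t):=\sin\pi x+\tfrac{\beta}{\pi}(x-ct)\cos\pi x ,
\]
which is nonzero at every half-integer; hence $\{x_j(t)\}_{j\in\Z}$ are exactly the zeros of $\Theta(\cdot,t)$, each simple (simplicity because $\Theta=\Theta'=0$ would force $c^2+\beta^2(x-ct)^2=0$, impossible for real $x$).

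Third, verify the Calogero--Moser equations. Differentiating $\Phi(x_j(t),t)=0$ once and twice and eliminating the trigonometric terms via $\tan\pi x_j=-\tfrac\beta\pi(x_j-ct)$, one gets, with $\xi_j:=x_j-ct$ and $Q_j:=c^2+\beta^2\xi_j^2$,
\[
\dot x_j=\frac{c\beta}{Q_j},\qquad \ddot x_j=\frac{2\,c^2\beta^3\,\xi_j\,(\pi^2+\beta^2\xi_j^2)}{Q_j^{\,3}} .
\]
For the interaction sum, $\Theta(\cdot,t)$ is entire of order $\le 1$ with only simple real zeros $\{x_k\}$ (the one genuinely nontrivial input; see below), so its logarithmic derivative has the Mittag--Leffler form $\Theta'/\Theta=a+\sum_k\bigl(\tfrac1{x-x_k}+\tfrac1{x_k}\bigr)$ with $a$ constant, and matching its Laurent expansion near $x=x_j$ against the one coming from the Taylor coefficients of $\Theta$ at $x_j$ (the constant $a$ drops out of the $(x-x_j)$ and $(x-x_j)^2$ coefficients) gives
\[
\sum_{k\ne j}\frac1{(x_j-x_k)^3}=\rho_j^{\,3}-3\rho_j\sigma_j+3\kappa_j ,\qquad
\rho_j:=\frac{\Theta''(x_j)}{2\Theta'(x_j)},\quad \sigma_j:=\frac{\Theta'''(x_j)}{6\Theta'(x_j)},\quad \kappa_j:=\frac{\Theta''''(x_j)}{24\Theta'(x_j)} .
\]
Computing $\Theta',\dots,\Theta''''$ at $x_j$ and again substituting $\tan\pi x_j=-\tfrac\beta\pi\xi_j$, the right-hand side collapses to $c^2\beta^3\xi_j(\pi^2+\beta^2\xi_j^2)/Q_j^{\,3}$, which is exactly $\tfrac12\ddot x_j$; since \eqref{e:cm1} reads $\ddot x_j=2\sum_{k\ne j}(x_j-x_k)^{-3}$, this proves the ansatz is a solution. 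Absolute convergence of the sum is clear since $x_j-x_k=(j-k)+O(1)$ uniformly, so the terms are $O(|j-k|^{-3})$.

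\emph{Main obstacle.} The only substantive point is that $\Theta(\cdot,t)$ has \emph{no complex zeros} — otherwise the residue identity above would involve phantom ``particles'' absent from the physical sum in \eqref{e:cm1} (the genus being automatically $\le 1$ from the order, which is why the additive constant $a$ is harmless). I would get reality of the zeros from the Hermite--Biehler theorem: one checks that
\[
\Theta(x,t)=\tfrac12\bigl(E(x,t)+E^*(x,t)\bigr),\qquad E(x,t):=e^{-i\pi x}\Bigl(\tfrac{\beta}{\pi}(x-ct)+i\Bigr),\quad E^*(x,t):=\overline{E(\bar x,t)} ,
\]
and that, because $\beta>0$, the unique zero of $E(\cdot,t)$ lies in the open lower half-plane while $|E(x,t)|>|E^*(x,t)|$ whenever $\im x>0$; hence $E(\cdot,t)$ is of Hermite--Biehler class and its companion $\Theta(\cdot,t)$ has only real zeros. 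Finally, the ``miracle'' of the third stage — an infinite force sum reducing to a rational function of $\xi_j$ — is the imprint of the B\"acklund transform: the relation $\dot x_j=\tfrac1{2i}\bigl(\tfrac1{\xi_j-ic/\beta}-\tfrac1{\xi_j+ic/\beta}\bigr)$ exhibits the lattice as B\"acklund-related to the single complex-conjugate pair of positions $ct\pm ic/\beta$, and one could alternatively organize the entire third stage through that transformation, citing \cite{Wojciechowski.82,Wojciechowski.83}.
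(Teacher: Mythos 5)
Your proof is correct, but it takes a genuinely different route from the paper's. The paper obtains Theorem~\ref{t:soliton} as the $N\to\infty$ limit of the periodic waves of Theorem~\ref{t:periodic}: it shows $\kappa a\pi\to c^2-\pi^2$, deduces pointwise convergence $\vp_N^{(n)}\to\vp^{(n)}$ for $n\le 2$ from the ODEs the profiles satisfy, and controls the infinite interaction sum by a uniform bound $\vp_N'\le 1-\delta$ plus dominated convergence, so the Calogero--Moser equations are inherited from the (B\"acklund-verified) periodic case. You instead verify the equations directly at fixed $c$: you encode the configuration as the zero set of the entire function $\Theta(x,t)=\sin\pi x+\tfrac{\beta}{\pi}(x-ct)\cos\pi x$, compute $\ddot x_j$ by implicit differentiation, extract $\sum_{k\ne j}(x_j-x_k)^{-3}$ from the Laurent expansion of $\Theta'/\Theta$ at $x_j$ (justified by the Hadamard factorization of this order-one function), and invoke Hermite--Biehler to certify that all zeros are real, so no phantom poles enter the sum. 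I checked the key algebra: $\Theta'(x_j)=\tfrac{\cos\pi x_j}{\pi}Q_j$, $\rho_j=\beta^2\xi_j/Q_j$, $\sigma_j=-\pi^2(\pi^2+3\beta+\beta^2\xi_j^2)/(6Q_j)$, $\kappa_j=-\pi^2\beta^2\xi_j/(6Q_j)$, and $\rho_j^3-3\rho_j\sigma_j+3\kappa_j$ does collapse to $c^2\beta^3\xi_j(\pi^2+\beta^2\xi_j^2)/Q_j^3=\tfrac12\ddot x_j$. Your route is self-contained (it needs neither Theorem~\ref{t:periodic} nor the $N$-body B\"acklund/polynomial machinery) at the price of complex-analytic input (Hadamard genus, Hermite--Biehler); the paper's route yields the periodic waves as a byproduct and never has to discuss complex zeros, since the finite-$N$ sums are controlled by elementary monotonicity. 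Two small points to tidy: state the standard modification of the Mittag--Leffler expansion in case some $x_k=0$, and make explicit that $\vp'\le 1-\pi^2/c^2$ (immediate from \eqref{e:yvp}), which gives $|x_j-x_k|\ge(\pi^2/c^2)|j-k|$ and is what your $O(|j-k|^{-3})$ bound really rests on.
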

The significance of the condition $c^2>\pi^2$ lies in the fact that
$\pi$ is the speed of long waves in the linearized Calogero-Moser lattice.
Thus these solitons exist with any speed exceeding the ``sound speed'' $\pi$.
These solitons are compression waves that produce a unit translation of particles 
in the direction of wave motion, with $x_j(t)$ increasing from $j-\frac12$ to $j+\frac12$
as $t$ increases from $-\infty$ to $\infty$.

The result above for solitary waves will follow by taking 
limits of waves on the infinite lattice that are periodic in space, satisfying
\begin{equation}\label{e:period}
    x_{j+N}(t) = x_j(t) + L,
\end{equation}
where $N>1$ is an integer and $L>0$ is real.  
Traveling waves of the form \eqref{e:xvp} satisfy this periodicity condition
if and only if the wave profile $\vp(s)$ satisfies 
\begin{equation}
    \vp(s+N) = \vp(s) + N-L \quad\text{for all $s$.}
\end{equation}
For such periodic waves,  since $x_{j+nN}=x_j+nL$ 
and due to the pole expansion identity 
\begin{equation}\label{e:polesCS}
    \sum_{n\in\Z} \frac2{(z-n)^3} =
    \frac{d^2}{dz^2} (\pi\cot\pi z) =  
     2\pi^3 \frac{\cos\pi z}{\sin^3 \pi z}  ,
\end{equation}
the infinite-lattice Calogero-Moser equations \eqref{e:cm1} reduce to 
Calogero-Sutherland equations for finitely many particles, namely Hamilton's equations
of motion for the Hamiltonian
\begin{equation}\label{d:HamCS}
    \mathcal{H}_{CS} = \frac12\sum_{j=1}^N p_j^2 + \frac12 \sum_{\substack{j,k=1\\j\ne k}}^N \frac{a^2}{\sin^2(a(q_j-q_k))} ,
\end{equation}
with $q_j=x_j$, $p_j=\dot x_j$, and $a=\frac\pi L$, see \cite{sutherland1971exact,Stone.Anduaga.ea.08,Philip.2019}.
Explicitly, we find the following.

\begin{theorem}[Periodic waves]\label{t:periodic}
The infinite Calogero-Moser lattice admits
wave solutions satisfying \eqref{e:xvp} and \eqref{e:period}
with $\vp(s)$ odd and monotone increasing 
being determined for $s\in(-N/2,N/2)$ by a relation of the form
\begin{equation}\label{e:pwave}
  \kappa  \tan\left(a(s-\vp)\right) = \tan\pi\vp \,.
\end{equation}
Here $a=\frac{\pi}L$ with $L=N-1$, and $\kappa>1$ is determined for any $c>\pi+a$  by 
\begin{equation}\label{e:pwave_eta}
\kappa = \frac{1+\nu}{1-\nu}, \qquad \nu = \sqrt{\frac{c^2 - (\pi+a)^2}{c^2-(\pi-a)^2}} .
\end{equation}
\end{theorem}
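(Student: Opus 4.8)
The plan is to reduce the infinite-lattice problem to a finite Calogero–Sutherland system and then to produce the traveling wave by a Bäcklund transform applied to a trivial (equally-spaced) solution. First I would make the periodicity reduction explicit: assuming $x_{j+N}(t)=x_j(t)+L$, the force on particle $j$ in \eqref{e:cm1} becomes a sum over the $N$ residue classes, and using the classical partial-fraction identity $\sum_{m\in\Z}(z-m\pi/a)^{-2}=a^2\csc^2(az)$ (equivalently $\sum_{m\in\Z}(z+mL)^{-3}$ telescopes into a derivative of $a^2\csc^2(az)$ with $a=\pi/L$), the infinite sum collapses to exactly the Calogero–Sutherland force field of \eqref{d:HamCS} acting on the $N$ representative particles $q_1,\dots,q_N$ on the circle of circumference $L$. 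So periodic-in-space solutions of \eqref{e:cm1} are in one-to-one correspondence with solutions of Hamilton's equations for $\mathcal H_{CS}$.

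Next I would invoke the Bäcklund transform for Calogero–Sutherland systems (Wojciechowski \cite{Wojciechowski.82,Wojciechowski.83}; see also \cite{Stone.Anduaga.ea.08,Philip.2019}). The transform takes a solution $\{q_k\}$ with positions $\{a q_k\}$ and produces a new solution $\{\tilde q_k\}$ whose positions are the other set of roots of a trigonometric polynomial relation; concretely, if one seeks a traveling wave one looks for a configuration for which the Bäcklund map acts as a pure shift in a comoving frame. I would start from the \emph{uniform} lattice $x_j^0(t)=j$, which is a (static) solution, and apply the Bäcklund transform with a spectral parameter tuned so that the image is a rigidly translating profile. The Bäcklund relations for Calogero–Sutherland are algebraic: they assert that $\prod_k \sin(a(z-q_k))$ and $\prod_k\sin(a(z-\tilde q_k))$ are related by a first-order ODE in the auxiliary variable, and evaluating this at the lattice points yields, after the substitution $x_j(t)=j-\vp(j-ct)$, a single functional relation between $s=j-ct$ and $\vp$. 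I expect this to be precisely \eqref{e:pwave}: the $\tan\pi\vp$ on the right comes from the ``background'' uniform spacing $h=1$ (period $2\pi$ in the relevant angle, hence argument $\pi\vp$), while $\tan(a(s-\vp))$ on the left comes from the Calogero–Sutherland period $L=N-1$.

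To pin down the parameter $\kappa$ and the speed relation \eqref{e:pwave_eta}, I would substitute the ansatz \eqref{e:pwave} into the reduced (Calogero–Sutherland) equations of motion and match. Differentiating \eqref{e:pwave} implicitly gives $\dot x_j=c\,\vp'$ and $\ddot x_j=c^2\vp''(\text{stuff})$ as rational functions of $\tan\pi\vp$ and $\tan(a(s-\vp))$; on the other side, the Calogero–Sutherland force, evaluated on a configuration whose points satisfy \eqref{e:pwave}, is computed by residues of $\sum_k a^2\csc^2(a(q_j-q_k))$, using the product representation of the configuration implied by \eqref{e:pwave}. Equating the two and demanding that the identity hold for all $s$ forces a polynomial identity in $\tan\pi\vp$ whose coefficients give two equations; solving them yields $\kappa=(1+\nu)/(1-\nu)$ together with $\nu^2=\bigl(c^2-(\pi+a)^2\bigr)/\bigl(c^2-(\pi-a)^2\bigr)$, and the condition $c>\pi+a$ is exactly what makes $\nu\in(0,1)$, hence $\kappa>1$, hence \eqref{e:pwave} solvable for a monotone odd $\vp$ on $(-N/2,N/2)$. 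The main obstacle is the middle step: correctly setting up the Bäcklund transform in the periodic/trigonometric setting and verifying that, starting from the uniform lattice, its output is genuinely a single-parameter family of rigidly traveling profiles rather than something more complicated — i.e., checking that the comoving frame ``freezes'' the Bäcklund flow. Once that structural fact is in hand, the derivation of \eqref{e:pwave}--\eqref{e:pwave_eta} and the monotonicity/oddness of $\vp$ are routine algebra, and Theorem~\ref{t:soliton} then follows by letting $N\to\infty$ (so $a\to0$, $L\to\infty$), under which \eqref{e:pwave} degenerates to \eqref{e:yvp} and $\kappa\to$ the corresponding soliton constant with $\nu\to\sqrt{(c^2-\pi^2)}/\sqrt{c^2-\pi^2}$ handled via the $a\to 0$ expansion.
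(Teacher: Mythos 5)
Your overall architecture --- reduce the spatially $L$-periodic infinite lattice to an $N$-particle Calogero--Sutherland system via the pole expansion identity \eqref{e:polesCS}, then produce the wave through Wojciechowski's B\"acklund transform and fix $\kappa,\nu$ by matching --- is the same framework the paper uses, and your remarks about the $N\to\infty$ limit are consistent with what happens. But the decisive step, which you yourself flag as ``the main obstacle,'' is left as a hope rather than an argument, and the specific mechanism you propose for it is not the one that works. You want to apply the B\"acklund transform \emph{to the uniform lattice} $x_j^0=j$ and tune a spectral parameter so that the image translates rigidly. Wojciechowski's transform is not a parametrized map from one $N$-particle solution to another; it is the coupled first-order system \eqref{e:perback1}--\eqref{e:perback2} linking an $N$-particle configuration to an $M$-particle ``dual'' configuration, and the relevant choice here is $M=1$: the wave $x_1,\dots,x_N$ is coupled to a \emph{single complex free particle} $y_0(t)=ct-i\beta$ (together with its period-$L$ translate), with $e^{2a\beta}=\mu$ given explicitly in terms of $c$ and $a$. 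The uniform lattice never enters as a seed. Without specifying $M$, the dual configuration, and $\beta$, the assertion that ``the B\"acklund map acts as a pure shift in a comoving frame'' cannot be checked, and your derivation of \eqref{e:pwave} from it remains conjectural.

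The second gap is computational. Your fallback --- substitute \eqref{e:pwave} into the second-order Calogero--Sutherland equations and evaluate the interaction sums by residues --- requires an explicit algebraic handle on the configuration $\{e^{2iax_k}\}$, which you gesture at (``the product representation implied by \eqref{e:pwave}'') but do not supply. The paper's proof runs through exactly this object: it first shows that the points $z_j=e^{2iax_j}$ are precisely the $N$ roots of the explicit polynomial $P(z;\ict)=z^N-\nu\ict z^{N-1}+\nu z-\ict$ with $\ict=e^{2iact}$, then uses $P$ to convert the interaction sums into rational expressions via $\sum_{k\ne j}\frac{z_j+z_k}{z_j-z_k}=\frac{z_jP''(z_j)-LP'(z_j)}{P'(z_j)}$, and verifies only the \emph{first-order} B\"acklund equations; the second-order equations then follow from the general implication proved in Appendix~\ref{a:backCS}. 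Verifying the second-order equations directly, as you propose, is substantially harder and still needs the polynomial $P$. As written, your proposal identifies the right ingredients but does not close the argument.
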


The proof of Theorem~\ref{t:periodic} will be provided in Section~\ref{s:periodic} below,
where we also discuss a connection to the projection method devised by Olshanetsky and Perelomov~\cite{Olshanetsky.Perelomov.81}
for the general solution of Calogero-Moser-Sutherland systems.
Theorem~\ref{t:soliton} will be derived in Section~\ref{s:soliton} through taking the limit $N\to\infty$.
Galilean transformations can be applied to these results
to obtain a broader family of waves, but we have no proof that {all} Calogero-Moser
solitary and periodic waves are obtained in this way.

The paper concludes with a discussion of how the solitary wave profiles
behave in the limits as $c\to\infty$ and as $c$ approaches $\pi$, 
along with numerical illustrations and comparison with 
wave profiles for nearest neighbor models corresponding to keeping only the
term with $m=1$ in system~\eqref{e:sys1}, especially for the case $\pow=1.73$
taken by Moler\'on~{\em et al.}~\cite{MoleronEA2014}. 

There is some evidence that the waves we find can be stable, 
as numerical computations reported by Abanov {\em et al.}~\cite{AbanovEA2011}
and Philip~\cite{Philip.2019} show localized ``1-soliton'' waves 
repeatedly passing over a finite array of particles subject to Calogero-Moser
dynamics with a weak harmonic trapping force.
The question of stability deserves a much more thorough investigation 
than we have space to undertake here, however, and we leave it for future research.

But before treating wave formulae, 
first in Section~\ref{s:longwave} we carry out 
a formal long-wave scaling analysis of the lattice equations in \eqref{e:sys1}.
When initially looking to study solitary waves on the infinite Calogero-Moser lattice
in the long-wave limit, it was surprising to us that the KdV scaling fails to be correct.
Thus it behooves us to explain what the correct scaling limit should be. 
It takes little more effort to do this for power-law forces with different exponents,
and the fact that such forces lead to the 
nonlocal continuum limits in \eqref{e:BObeta} is of general interest.

We also adapt the analysis to formally handle systems with forces alternating in sign,
as appears appropriate for modeling the experiments of \cite{MoleronEA2014}. 
Pairing consecutive terms produces an effective repulsive force that decays as $d^{-\pow-2}$ at long range. 
For $\pow>2$ this results in a KdV scaling, as one may expect from the case of purely repulsive forces. 
For $0<\pow<2$ one might expect to get a nonlocal PDE of the form \eqref{e:BObeta} with $\pow$ replaced by $\pow+1$.  
Thus it is quite surprising that instead a KdV scaling still works, 
for all $\pow>0$.

%\vfill\pagebreak
\section{The long-wave scaling limit}\label{s:longwave}

%\hide{long wave scaling}
{  % begin HIDE
The lattice equations \eqref{e:sys1} are in equilibrium 
for particles with a uniform spacing that may be taken
to be unity after a trivial scaling.  
Considering perturbations $x_j=j+\epsilon v_j$ 
about this equilibrium solution and retaining only terms  
of order $\epsilon$ results in the linearized system
\begin{equation}
   \ddot v_j =  \pow(\pow+1)\sum_{m=1}^\infty \frac{v_{j+m}-2v_j+v_{j-m}}{m^{\pow+2}}\,.
\end{equation}
Seeking solutions $v_j(t)=e^{i(kj-\omega t)}$ with wave number $k$ 
yields a dispersion relation  %$\omega^2 = k^2 c_\pow(k)^2$ 
with squared phase speed %$c_\pow(k)$ satisfying
\begin{equation}
    \frac{\omega^2}{k^2} =\pow(\pow+1)\sum_{m=1}^\infty \frac{\sinc^2(\frac12 km)}{m^\pow}\,,
    \qquad \sinc x = \frac{\sin x}{x}\,.
\end{equation}
The maximal linear wave speed appears in the long-wave limit,
where we get
\begin{equation} \label{csound}
\left|\frac{\omega}{k}\right| 
\to c_\pow := \sqrt{\pow(\pow+1)\zeta_\pow}\,,
\end{equation}
in terms of  the Riemann zeta function denoted
$\zeta_s=\sum_{m=1}^\infty m^{-s}$. 
In particular 
the long-wave speed in the Calogero-Moser lattice is $c_2=\pi$,
since $\zeta_2=\frac{\pi^2}6$.

This long-wave limit 
formally leads to the expectation that 
the scaling ansatz $x_j=j+\eps v(\eps j,\eps t)$ 
should require
$v(x,\tau)$ to approximate a solution of the wave equation
\begin{equation}
\D_\tau^2 v = c_\pow^2\D_x^2 v \,,
\end{equation}
up to residual errors that vanish as $\eps\to0$ for times 
$t$ of order $O(1/\eps)$.
In traditional fashion, we now examine the effects
of dispersion and nonlinearity  
on long waves traveling in one direction 
over longer time scales,
by making the scaling ansatz 
\begin{equation} \label{e:scalepq}
    x_j = j + \eps^p v(\eps(j-c_\pow t), \eps^q t)\,.
\end{equation}
The case $p=1$, $q=3$ corresponds to the classical KdV scaling.

For the sake of clarity regarding the results of 
formal scaling analysis, let us define the {\em lattice error}
of the ansatz \eqref{e:scalepq} in equation \eqref{e:sys1}
to be the result of substituting \eqref{e:scalepq} into the expression
\begin{equation}\label{e:latticeR}
R_\eps = \ddot x_j + \pow \sum_{m=1}^\infty  
    \Bigl(  (x_{j+m}-x_j)^{-\pow-1} -(x_j-x_{j-m})^{-\pow-1}  \Bigr)\,.
\end{equation}
We consider this as a function $R_\eps=R_\eps(x,\tau)$
where $x=\eps(j-c_\pow t)$ and $\tau=\eps^q t$. 
The result of formal scaling analysis will be to show 
that for a suitably ``nice'' 
function $v(x,\tau)$, taken as {\em fixed}, 
the lattice error takes the form
\begin{equation} \label{err:lat1}
R_\eps(x,\tau) = 
\eps^{p+q+1} Q(x,\tau) + o( \eps^{p+q+1} )
\end{equation}
in the limit $\eps\to0$. The function $Q$ is independent of $\eps$
and  is the error of substituting $u=-\D_x v$ 
after a simple scaling 
into either a nonlocal PDE of the form~\eqref{e:BObeta},
or the KdV equation 
\begin{equation}\label{e:kdv}
    \D_\tau u + u \D_x u + \D_x^3u = 0.
\end{equation}
Notably, the lattice error $R_\eps$ will be $o(\eps^{p+q+1})$ 
if and only if  $Q=0$, meaning $u$ is a solution of 
the nonlocal PDE or the KdV equation in the appropriate case. 

\begin{theorem}\label{t:longwave}
Let $\pow>1$ with $\pow\ne3$. 
Assume $v(x,\tau)$ is smooth with square-integrable derivatives
$\D_x^jv$ for $1\le j\le 5$. % \fix{integrability hypotheses}
Then with 
\[ u(x,\tau)=-\D_x v(x,\tau)\,,
\quad  \kappa_1=2c_\pow\,, 
\quad \kappa_2=\pow(\pow+1)(\pow+2) \zeta_\pow\,, 
\]
the lattice error relation \eqref{err:lat1} holds as follows.  
\begin{itemize}
\item[(i)] For $\pow>3$, $p=1$, $q=3$, we have
$R_\eps = \eps^5 Q+o(\eps^5)$ with
\[ Q = \kappa_1\,\D_\tau u +\kappa_2\,   u \D_x u + \kappa_3\,  \D_x^3 u ,   
\quad \kappa_3 = \tfrac1{12}\pow(\pow+1)\zeta_{\pow-2}\,.\]
\item[(ii)] For $1<\pow<3$, $p=\pow-2$, $q=\pow$,
we have $R_\eps = \eps^{2\pow-1}Q+o(\eps^{2\pow-1})$ with
\[ 
Q = \kappa_1\, \D_\tau u + \kappa_2\,  u \D_x u + \kappa_3\,  H|D|^\alpha u ,   
\quad    \kappa_3 = \pow(\pow+1)\int_0^\infty \frac{1-\sinc^2(x/2)}{x^\pow}\,dx\,.
   \]
\end{itemize}
\end{theorem}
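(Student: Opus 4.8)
The plan is to substitute the ansatz \eqref{e:scalepq} directly into the lattice-error expression \eqref{e:latticeR} and expand everything systematically in powers of $\eps$, separating the contributions of the inertial term $\ddot x_j$ and the infinite force sum, then matching the leading surviving order. For the inertial term, with $x=\eps(j-c_\pow t)$ and $\tau=\eps^q t$, the chain rule gives $\ddot x_j = \eps^{p+2}c_\pow^2\,\D_x^2 v - 2\eps^{p+q+1}c_\pow\,\D_x\D_\tau v + \eps^{p+2q}\D_\tau^2 v$; the first term is $O(\eps^{p+2})$ and the second is exactly the $\kappa_1\D_\tau u$ piece (after writing $u=-\D_x v$ and dividing by the common prefactor), while the last is higher order in both regimes.

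The heart of the matter is the force sum. Writing $x_{j+m}-x_j = m + \eps^p\bigl(v(x+\eps m,\tau)-v(x,\tau)\bigr)$ and similarly for $x_j-x_{j-m}$, I would Taylor-expand each factor $(x_{j\pm m}\mp x_j)^{-\pow-1}$ in the small quantity $\eps^p(\cdots)/m$, keeping enough terms. The zeroth order cancels against the equilibrium; the first-order term produces $\pow(\pow+1)\sum_m m^{-\pow-2}\bigl(v(x+\eps m)-2v(x)+v(x-\eps m)\bigr)$, which is the linear dispersive operator. The second-order term in the expansion, combined with the telescoping difference structure, yields the quadratic nonlinearity $\kappa_2 u\D_x u$. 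The key technical point is the asymptotic behavior as $\eps\to0$ of the linear sum $\eps^p\sum_{m=1}^\infty m^{-\pow-2}\Delta_m v$ where $\Delta_m v = v(x+\eps m)-2v(x)+v(x-\eps m)$: by Poisson summation or by direct Fourier representation, $\sum_m m^{-\pow-2}\Delta_m v$ has Fourier symbol $-2\sum_m m^{-\pow-2}(1-\cos(\eps m k)) = -2(\eps|k|)^{\pow+1}\sum_m \frac{1-\cos(\eps m|k|)}{(\eps m|k|)^{\pow+1}}\cdot(\eps|k|)^{-1}$, and one analyzes the Riemann-sum limit $\eps\sum_m g(\eps m)\to\int_0^\infty g$. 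When $\pow>3$ the sum $\sum m^{-\pow+2}$ converges and a Taylor expansion of $\Delta_m v$ to fourth order gives the local $\D_x^3 u$ term with $\kappa_3=\tfrac1{12}\pow(\pow+1)\zeta_{\pow-2}$ at order $\eps^{p+2}\cdot\eps^2 = \eps^{p+4}$, which is $\eps^5$ when $p=1$. When $1<\pow<3$ this Taylor expansion diverges term-by-term, and instead the scaled sum converges to the integral $\int_0^\infty x^{-\pow}(1-\sinc^2(x/2))\,dx$ times $|k|^\pow$ (with the odd part giving the Hilbert transform factor $-i\sgn k$ from the telescoping asymmetry), producing the nonlocal operator $H|D|^\pow u$ at order $\eps^{p+2}\cdot\eps^{\pow-1}=\eps^{p+\pow+1}$, which is $\eps^{2\pow-1}$ when $p=\pow-2$. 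Choosing $q$ so that $p+q+1$ equals this common order — $q=3$ in case (i), $q=\pow$ in case (ii) — makes the $\D_\tau u$ contribution appear at precisely the same order as the dispersive and nonlinear terms, giving \eqref{err:lat1}.

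I would carry this out in the order: (1) expand the inertial term by the chain rule and record the orders; (2) expand each force term to second order in $\eps^p$, tracking remainders; (3) sum the first-order (linear) contributions and identify the dispersive symbol via Fourier transform, splitting into the $\pow>3$ and $\pow<3$ cases for the Riemann-sum/integral limit; (4) sum the second-order contributions, using the telescoping to get $\D_x(u^2)$ and collecting $\kappa_2$; (5) collect everything at order $\eps^{p+q+1}$ and verify the higher-order remainders are genuinely $o(\eps^{p+q+1})$ using the square-integrability of $\D_x^j v$ for $1\le j\le 5$ (this controls the error in replacing sums by integrals and in truncating Taylor expansions, via Plancherel and dominated convergence).

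The main obstacle is step (3) in the case $1<\pow<3$: justifying rigorously that $\eps^{\pow-1}\sum_{m=1}^\infty m^{-\pow-2}\Delta_m v(x)$ converges, uniformly enough in $x$ and in $L^2$, to $c\,(\mathcal F^{-1}|k|^\pow\hat v)(x)$ with the constant $c=\int_0^\infty x^{-\pow}(1-\sinc^2(x/2))\,dx$, and that the corrections are $o(\eps^{p+q+1})$ after the inertial and nonlinear terms are added in. This requires care because the integrand $x^{-\pow}(1-\sinc^2(x/2))$ is integrable at both ends precisely when $1<\pow<3$ (it behaves like $x^{2-\pow}$ near $0$ and $x^{-\pow}$ near $\infty$), and the convergence of the Riemann-type sum to the integral must be quantified; the hypothesis $\pow\ne3$ is exactly what excludes the borderline logarithmic case. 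The estimate $\D_x^jv\in L^2$ for $j$ up to $5$ is what lets one bound the next-order terms in the Taylor remainder uniformly, so the bookkeeping of which derivative norms appear where is the other delicate point.
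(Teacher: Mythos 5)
Your proposal is correct and follows essentially the same route as the paper's proof: Taylor-expand the force terms about the uniform spacing, split the lattice error into inertial, linear and nonlinear parts, obtain $\kappa_2\,u\D_x u$ from the second-order terms at order $\eps^{2p+3}$, and extract the dispersive term from the linear sum — by a fourth-order Taylor expansion giving $\tfrac1{12}\zeta_{\pow-2}\D_x^3u$ when $\pow>3$, and by a Fourier-side Riemann-sum limit giving $\eta_\pow H|D|^\pow u$ when $1<\pow<3$ (the paper merely packages the differences $v(x\pm\eps m)-v(x)$ with an averaging operator $A_{\pm\eps m}$ and controls remainders by dominated convergence plus a Cauchy--Schwarz bound showing $|k|^\pow\hat u\in L^1$). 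The one step your sketch leaves implicit, and where your intermediate symbol manipulation is garbled, is the exact cancellation at the lower order $\eps^{p+2}$ between the inertial term $c_\pow^2\D_x^2v$ and the $\zeta_\pow$ part of the linear force sum: one must split $\sinc^2=1-(1-\sinc^2)$ so that the wave-operator piece cancels by the definition of $c_\pow$ and only then does the remaining Riemann sum converge to $\int_0^\infty x^{-\pow}\bigl(1-\sinc^2(x/2)\bigr)\,dx$ for $1<\pow<3$ (the unsubtracted $\int_0^\infty x^{-\pow-2}(1-\cos x)\,dx$ diverges there) — your final integral is the right one, so this is a matter of spelling out the bookkeeping rather than a missing idea.
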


\noindent
\begin{remark}\label{r:log}
The case $\pow=3$ requires a logarithmic correction to the KdV scaling.
In Appendix~\ref{a:logkdv}, we show that 
if \eqref{e:scalepq} is replaced in this case  by the scaling ansatz
\begin{equation}
    x_j = j + \eps\log(1/\eps) v(\eps(j-c_3 t),\eps^3\log(1/\eps) t),
\end{equation}
then $R_\eps=\eps^5\log^2(1/\eps)(Q+o(1))$ where $Q$ is as in part (i) but with $\kappa_3=1.$
\end{remark}

\begin{remark}
The PDE errors take a simpler form  after a scaling.
We find that
in case (i),
$Q = \D_\tau \tilde u + \tilde u\D_x\tilde u + \D_x^3\tilde u$, 
where 
$\tilde u(x,\tau)=\gamma^2 u(\gamma a x,\gamma^3 b\tau)$ with 
\[
\quad %\text{with\ \ }
a^2 = \frac{\kappa_3}{\kappa_2}, 
\quad b = \frac{\kappa_1 a}{\kappa_2},  
\quad \gamma^{5} = \frac{\kappa_2}{a}\,.
\]
In case (ii), 
$Q = \D_\tau \tilde u + \tilde u\D_x\tilde u + H|D|^\alpha \tilde u$, where
$\tilde u(x,\tau)=\gamma^{\pow-1} u(\gamma a x,\gamma^\pow b\tau)$ with 
\[
\quad
a^{\pow-1} = \frac{\kappa_3}{\kappa_2}\,, \quad
b = \frac{\kappa_1a}{\kappa_2}, \quad
\gamma^{2\pow-1} = \frac{\kappa_2}{a} \,.
\]
\end{remark}

We emphasize that Theorem~\ref{t:longwave} is the result of a purely formal long-wave analysis.
Of course, it would be desirable to prove a long-wave approximation theorem
that compares true solutions of the lattice system~\eqref{e:sys1} to solutions of
the nonlocal PDE~\eqref{e:BObeta} over the appropriate time scale. Such an analysis
is beyond the scope of the present paper, however. We expect it would involve
delicate stability estimates for dispersive wave propagation such as have been used
to justify KdV limits in various fluid and lattice 
systems~\cite{Craig85,SchneiderWayne99,SchneiderWayne2000,HongEA2021}.

\begin{proof}
From \eqref{e:scalepq}, it is convenient to express
differences of lattice particle positions in terms of $u$
as follows. We write
\begin{align*}
 x_{j+m}-x_j &= m +\eps^p (v(x+\eps m,\tau)-v(x,\tau)) 
= m(1-\eps^{p+1} A_{\eps m}u)  ,
\\
 x_j - x_{j-m} &= m +\eps^p (v(x,\tau)-v(x-\eps m,\tau)) 
= m(1-\eps^{p+1} A_{-\eps m}u) , 
\end{align*}
in terms of the averaging operator defined for $h\ne0$ by
\begin{equation}\label{d:Ah}
A_hu(x,\tau) = \frac1h\int_0^h u(x+z,\tau)\,dz \,.
\end{equation}
By our assumptions this is uniformly bounded, 
with 
\begin{equation} \label{bound:A}
|A_h u(x,\tau)|\le \|u\|_\infty = O(1).
\end{equation}
Then with the shorthand 
$\pow_1=\pow+1$, $\pow_2=\frac12(\pow+1)(\pow+2)$,
Taylor expansion  
yields 
\begin{align*}
\frac{m^{\pow+1}}{(x_{j+m}-x_j)^{\pow+1}} &=
1 + \pow_1 \eps^{p+1} A_{\eps m} u 
+ \pow_2 \eps^{2p+2} (A_{\eps m}u)^2 +  O(\eps^{3p+3}) \,,
\\
\frac{m^{\pow+1}}{(x_j-x_{j-m})^{\pow+1}} &=
1 + \pow_1 \eps^{p+1} A_{-\eps m} u 
+ \pow_2 \eps^{2p+2} (A_{-\eps m}u)^2 + O(\eps^{3p+3}) \,.
\end{align*}
Then we can write \eqref{e:latticeR} as 
\begin{equation}\label{d:Reps}
R_\eps = \ddot x_j+ \pow\pow_1 L_\eps + \pow\pow_2 N_\eps + O(\eps^{3p+3}) \,,
\end{equation}
where the acceleration, linear and nonlinear terms are given by
\begin{align}
\label{d:ddxj}
\ddot x_j &= -\eps^{p+2}c_\pow^2\D_xu + 2 \eps^{p+q+1}c_\pow \D_\tau u + \eps^{p+2q}\D_\tau^2 v \,,
\\
\label{d:Leps}
 L_\eps &= \eps^{p+1} \sum_{m=1}^\infty    \frac1{m^{\pow+1}} 
( A_{\eps m}u - A_{-\eps m}u )
 \,,
 \\
 \label{d:Neps}
 N_\eps &= \eps^{2p+2} \sum_{m=1}^\infty    \frac1{m^{\pow+1}} 
( A_{\eps m}u + A_{-\eps m}u ) 
( A_{\eps m}u - A_{-\eps m}u ) \,.
\end{align}

Let us first estimate factors in the nonlinear term. 
\begin{lemma}\label{l:Neps}
   For fixed $x,\tau$,  we have
   $ N_\eps=  \eps^{2p+3} \bigl( 2 \zeta_\pow u \D_x u \bigr) + o(\eps^{2p+3}). $
\end{lemma}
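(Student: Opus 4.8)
The plan is to read Lemma~\ref{l:Neps} as a dominated-convergence statement for the series defining $N_\eps$ in \eqref{d:Neps}. Since $(A_{\eps m}u + A_{-\eps m}u)(A_{\eps m}u - A_{-\eps m}u) = \eps\, g_m(\eps)$ with
\[
g_m(\eps) := \tfrac1\eps\,(A_{\eps m}u + A_{-\eps m}u)(A_{\eps m}u - A_{-\eps m}u),
\]
we have $N_\eps = \eps^{2p+3}\sum_{m\ge1} m^{-\pow-1}g_m(\eps)$, so the lemma is equivalent to $\sum_{m\ge1} m^{-\pow-1}g_m(\eps)\to 2\zeta_\pow\,u\,\D_x u$ as $\eps\to0$, at the fixed point $(x,\tau)$. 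First I would compute the termwise limit: because $v$ is smooth, for each fixed $m$ the fundamental theorem of calculus and continuity of $\D_x u$ give $A_{\pm\eps m}u = u \pm \tfrac12\eps m\,\D_x u + o(\eps)$, hence $A_{\eps m}u + A_{-\eps m}u\to 2u$ and $\tfrac1\eps(A_{\eps m}u - A_{-\eps m}u)\to m\,\D_x u$, so $g_m(\eps)\to 2m\,u\,\D_x u$ and $m^{-\pow-1}g_m(\eps)\to 2u\,\D_x u/m^{\pow}$. Summing over $m$ gives $2\zeta_\pow u\,\D_x u$, convergent since $\pow>1$.

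Second, I would produce an $\eps$-uniform summable majorant so that the interchange of limit and sum is legitimate. The bound \eqref{bound:A} gives $|A_{\eps m}u + A_{-\eps m}u|\le 2\|u\|_\infty$. For the antisymmetric factor, rewriting the averages as integrals shows
\[
A_{\eps m}u - A_{-\eps m}u = \frac1{\eps m}\int_0^{\eps m}\big(u(x+z,\tau)-u(x-z,\tau)\big)\,dz,
\]
and the mean-value estimate $|u(x+z,\tau)-u(x-z,\tau)|\le 2z\,\|\D_x u\|_\infty$ then yields $|A_{\eps m}u - A_{-\eps m}u|\le \eps m\,\|\D_x u\|_\infty$, so $|g_m(\eps)|\le 2m\,\|u\|_\infty\|\D_x u\|_\infty$. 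Both norms are finite: from the hypotheses $\D_x v,\D_x^2 v,\D_x^3 v\in L^2(\R)$ we get $u=-\D_x v\in H^1(\R)$ and $\D_x u=-\D_x^2 v\in H^1(\R)$, so $\|u\|_\infty,\|\D_x u\|_\infty<\infty$ by the Sobolev embedding $H^1(\R)\hookrightarrow L^\infty(\R)$. Hence $\sum_{m\ge1}m^{-\pow-1}|g_m(\eps)|\le 2\zeta_\pow\|u\|_\infty\|\D_x u\|_\infty<\infty$ uniformly in $\eps$, and the dominated-convergence theorem for series delivers $\sum_m m^{-\pow-1}g_m(\eps)\to 2\zeta_\pow u\,\D_x u$, that is, $N_\eps=\eps^{2p+3}(2\zeta_\pow u\,\D_x u)+o(\eps^{2p+3})$.

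I expect the only genuine subtlety — and the step to flag — to be precisely the construction of this uniform majorant. The naive alternative of Taylor-expanding each summand one order further leaves a remainder of size $O\big((\eps m)^3/m^{\pow+1}\big)$, whose sum over $m$ diverges when $1<\pow\le3$, which is exactly the regime of case~(ii) of Theorem~\ref{t:longwave}; so termwise expansion alone does not suffice. Replacing the pointwise Taylor remainder by the \emph{global} Lipschitz bound $|A_{\eps m}u - A_{-\eps m}u|\le\eps m\,\|\D_x u\|_\infty$, valid for every $m$ rather than only for $\eps m$ small, is what restores summability, at the modest cost of using $u\in H^1$ in place of mere smoothness. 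Everything else is routine bookkeeping; I would also note in passing that the same estimate shows $N_\eps=O(\eps^{2p+3})$ outright, which is the form actually needed when collecting the error terms in \eqref{d:Reps}.
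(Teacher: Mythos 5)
Your proof is correct and follows essentially the same route as the paper: termwise limits of the two averaged factors (sum factor $\to 2u$, difference factor $\to \eps m\,\D_x u$) combined with a summable majorant uniform in $\eps$, then dominated convergence for the series; the paper's $o_m(1)$ notation, defined to be uniformly bounded in $m$, encodes exactly the majorant you construct explicitly via $|A_{\eps m}u - A_{-\eps m}u|\le \eps m\,\|\D_x u\|_\infty$. Your added remarks on the Sobolev embedding and on why a higher-order Taylor remainder would fail for $1<\pow\le 3$ are sound but not needed beyond what the paper already records in \eqref{bound:A} and \eqref{e:diffAm}.
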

\begin{proof}
We have
\begin{equation}
 A_{\eps m}u + A_{-\eps m}u 
 = \frac1{\eps m} \int_{-\eps m}^{\eps m} u(x+z,\tau)\,dz 
 =2 u(x,\tau) + o_m(1) \,,
\end{equation}
where the notation $o_m(1)$ denotes a generic term 
that is uniformly bounded with respect to $m$ and satisfies
$o_m(1) \to 0$ as $\eps\to 0$ for each fixed $m$.
For the difference factor, we have
\begin{align}
\frac{ A_{\eps m}u - A_{-\eps m}u}{\eps m} &= 
 \frac1{(\eps m)^2} \int_0^{\eps m} u(x+z,\tau)- u(x+z-\eps m,\tau)\,dz
\nonumber \\ &= 
 \frac1{(\eps m)^2} \int_0^{\eps m} \int_{-\eps m}^0 
 \D_x u(x+y+z,\tau)\,dy \,dz
\nonumber \\ &= \D_x u(x,\tau) +  o_m(1) \,,
\label{e:diffAm}
\end{align}
since our assumptions ensure $\D_x u$ is bounded and continuous. 
By consequence we find that as $\eps\to0$,
\begin{align}
    N_\eps &= \eps^{2p+3} 
    \sum_{m=1}^\infty \frac1{m^\pow} ( 2u\D_x u+o_m(1)) \,,
\label{eq:N}
\end{align}
and the lemma follows by dominated convergence.
\end{proof}

By \eqref{e:diffAm}, we find similarly that the leading part of the linear term is
\begin{equation}
    L_\eps = \eps^{p+2} \zeta_\pow \D_x u + o(\eps^{p+2})
\end{equation}
This cancels with the term $\eps^{p+2}c_\pow^2\D_x u$
in $\ddot x_j$ since the sound speed in the linearized lattice
satisfies $c_\pow^2 = \pow\pow_1\zeta_\pow$ from \eqref{csound}. 
The dispersive term arises at the next order in the expansion of $L_\eps$.
We consider first the easier case $\pow>3$.   

\medskip
\noindent
{\bf Case (i):}
For $\pow>3$,  standard use of Taylor's theorem yields
\begin{align}
    -\frac{A_{\eps m} u - A_{-\eps m} u}{\eps m} &= 
    \frac{v(x+\eps m,\tau)-2v(x,\tau)+v(x-\eps m,\tau)}{(\eps m)^2} 
    \nonumber \\ 
    &=  \D_x^2 v + 
    \frac{(\eps m)^2}{12} (\D_x^4 v +o_m(1)),
\end{align}
since our assumptions ensure $\D_x^4 v$ is bounded and continuous. Hence we have 
\begin{align}\label{L:kdv}
    L_\eps &= \eps^{p+2}  \sum_{m=1}^\infty \frac1{m^\alpha} \left(\D_x u + 
    \frac{\eps^2 m^2}{12} (\D_x^3 u +o_m(1)) \right)
    \nonumber \\ &=
    \eps^{p+2} \zeta_\pow\D_x u + \tfrac1{12}\eps^{p+4} \zeta_{\pow-2} \D_x^3 u + o(\eps^{p+4}),
\end{align}
by dominated convergence.
Then taking $p=1$ and $q=3$ (corresponding to the KdV scaling), the dispersive and nonlinear terms balance and
we find $R_\eps = \eps^5 Q + o(\eps^5 )$ with $Q$ as stated in the Theorem.

\medskip
\noindent
{\bf Case (ii):} For $\pow\le3$  the ordinary KdV scaling fails, due to the
divergence of the series $\sum 1/m^{\pow-2}$ appearing in \eqref{L:kdv}.
To study the linear term $L_\eps u$ we take the Fourier transform, 
defined for $u\in L^1(\R)$ (suppressing dependence on $\tau$) by 
\[
\hat u(k) = \calF u(k) = \frac1{2\pi}\int_{\R} u(x)e^{-ikx} \,dx \,.
\]
Since $\widehat{\D_x u}(k)=ik\hat u(k)$ and 
$\widehat{ A_{\eps m}u}(k) =  \hat u(k) (e^{i\eps mk}-1)/{i\eps mk}$, 
we find
\begin{align}
    \widehat{ L_\eps}(k) &= 
    \eps^{p+1} \sum_{m=1}^\infty
    \frac1{m^{\pow+1}} 
    \frac{(e^{i\eps mk/2} - e^{-i\eps mk/2})^2 }
    {(i\eps m k)^2}  {(i\eps m k)} \hat u(k)
    \nonumber
    \\ &= 
     \eps^{p+2}  {ik \hat u(k)} \sum_{m=1}^\infty 
    \frac  {\sinc^2(\eps m k/2)}
    {m^\pow}
    \nonumber
\\ &= 
     %\eps^{p+2}  \widehat{\D_x u}(k) \left( \zeta_\pow - 
     \eps^{p+2}  ik \hat u(k) \left( \zeta_\pow - 
     (\eps |k|)^\pow \sum_{m=1}^\infty 
    \frac  {1-\sinc^2(\eps m |k|/2)} {(\eps m |k|)^\pow}
     \right).
     \label{e:Lhateps}
\end{align}
The last line involves a Riemann sum approximation to a convergent integral. 
Since $1<\pow<3$, we have
\begin{equation}\label{d:etapow}
     h \sum_{m=1}^\infty 
    \frac  {1-\sinc^2(mh/2)} {(mh)^\pow} 
  \  \underset{h\to0}{\longrightarrow}\ %
    \eta_\pow := \int_0^\infty \frac{1-\sinc^2(x/2)}{x^\pow}\,dx<\infty\,.
\end{equation}
Therefore we infer that as $\eps\to0$,
\begin{equation}\label{e:Lhat2}
   \widehat{L_\eps}(k) = 
     \eps^{p+2}  \hat u(k) \left( ik\zeta_\pow + \eps^{\pow-1} 
     (-i\sgn k|k|^\pow) (\eta_\pow +o_k(1)) \right).
\end{equation}
Upon Fourier inversion we find 
\begin{equation}
    L_\eps(x,\tau) = \eps^{p+2} \zeta_\pow \D_x u + \eps^{p+\pow+1}\eta_\pow H|D|^\pow u + \eps^{p+\pow+1} E_\eps(x,\tau)\,,
\end{equation}
where $\widehat{E_\eps}(k) =  \hat u(k)  |k|^\pow o_k(1)$.
Our assumptions ensure $\hat u(k)|k|^\pow$ is integrable,
for since $\frac12(1+k^2)|k|^{2\pow}\le 1+k^8$,
by the Cauchy-Schwarz inequality we have
\begin{align*}
\left( \intR |\hat u(k)||k|^\pow \,dk \right)^2 &\le 
2\intR |\hat u(k)|^2(1+k^8)\,dk
\,\intR \frac{dk}{1+k^2} 
\\
&= C \intR u^2+ (\D_x^4 u)^2 \,dx <\infty \,.
\end{align*}
By Fourier inversion and dominated convergence
it follows $E_\eps(x,\tau)=o(1)$.
Taking $p=\pow-2$ and $q=\pow$, the linear and nonlinear terms in $R_\eps$ 
now balance and we find
$R_\eps = \eps^{2\pow-1}Q + o(\eps^{2\pow-1})$ with $Q$ as stated in the Theorem.
\end{proof}

\begin{remark}\label{r:eta}
An explicit formula for the integral $\eta_\pow$ in \eqref{d:etapow} is
\begin{equation}\label{e:rmt_result}
 \eta_\pow= %\int_0^\infty \frac{1-\sinc^2(x/2)}{x^\alpha}\,dx = 
 \begin{cases}
-2 \sin\left(\pi\alpha/2\right) \Gamma(-1-\alpha) \,,
  & \pow\in(1,2)\cup(2,3),\\
  \ \ \pi/6\,, & \pow=2.
 \end{cases}
\end{equation}
This formula for general $\pow$ is motivated from the form of Ramanujan's Master Theorem \cite{Tew2012}, 
which relates to Mellin transforms. We were not able to verify the hypotheses of this theorem,
unfortunately, but instead found the following rather uncomplicated direct proof
of \eqref{e:rmt_result}: Note
    $\eta_\pow = 2\int_0^\infty x^{2-\pow} f(x)\,dx$ where 
\begin{equation}\label{d:ffrac}
    f(x):= \frac{1-\sinc^2(\frac12 x)}{2x^2} = 
    \frac{\cos x - 1 + \frac12 x^2}{x^4}.
\end{equation}
For $s,p>0$ we have $
\int_0^\infty x^{s-1} e^{-px}\,dx = p^{-s}\,\Gamma(s)$,
and this formula extends by analytic continuation to hold whenever $\re s$ and $\re p>0$.
Taking $p=\eps\pm i$ with $\eps>0$ we find
\begin{align*}
    I(s,\eps)&:= \int_0^\infty x^{s-1}x^4f(x)e^{-\eps x}\,dx
    \\ &\,= \frac12\left( (\eps-i)^{-s} +(\eps+i)^{-s} \right)\Gamma(s) 
    -\eps^{-s}\Gamma(s) + \frac12\eps^{-s-2}\Gamma(s+2) \,.
\end{align*}
The integral $I(s,\eps)$ is analytic in the half-plane where $\re s>-4$, 
and this formula extends analytically to this half-plane. 
For $\re s\in(-4,-2)$ with $s\ne-3$ we can take the limit $\eps\downarrow0$ 
and infer % \fix{check??} 
\begin{equation}\label{e:Is0}
I(s,0) = \cos\left(\frac{\pi s}2\right)\Gamma(s)\,.
\end{equation}
Taking $s=-1-\pow$ we deduce the first formula in \eqref{e:rmt_result}.
To get the second formula, take $s\to-3$.
\end{remark}

\subsection*{Alternating forces}
  For a system having power-law interaction forces that alternately repel and attract,
  given by 
\begin{equation}\label{e:sysalt}
    \ddot x_j = 
    -\pow \sum_{m=1}^\infty  
    \Bigl(  (x_{j+m}-x_j)^{-\pow-1} -(x_j-x_{j-m})^{-\pow-1}  \Bigr)(-1)^{m-1}\,,
     %\frac1{ (x_{j+m}-x_j)^{\pow+1}}-\frac1{(x_j-x_{j-m})^{\pow+1}}
    %\frac1{(x_{j+m}-x_j)^\pow} -\frac1{(x_j-x_{j-m})^\pow}
\end{equation}
  we find that the KdV scaling as in part (i) of Theorem~\ref{t:longwave} 
  works {\em for all $\pow>0$.} 
  The only change in the statement, aside from including a factor $(-1)^m$
  in the definition of the lattice error $R_\eps$ in \eqref{e:latticeR},
  is that for determining the sound speed $c_\pow$
  and the coefficients, the zeta function values $\zeta_\pow$ and $\zeta_{\pow-2}$
  should be respectively replaced by values $\zeta_\pow^*$ and $\zeta_{\pow-2}^*$ of the 
  {\em alternating zeta function} given by $ \zeta_s^* = \zeta_s(1-2^{1-s})$,
  satisfying $\zeta_s^* = \sum_{m=1}^\infty (-1)^{m-1}m^{-s}$ for $\re s>0$.

  \begin{theorem}\label{t:alt}
  Let $\pow>0$ and take $v$ and $u$ as in Theorem~\ref{t:longwave}. 
  Then under the ansatz~\eqref{e:scalepq} 
  with $p=1$, $q=3$, $c_\pow=\sqrt{\pow(\pow+1)\zeta_\pow^*}$,
  the lattice error $R_\eps$ for system \eqref{e:sysalt} 
  satisfies $R_\eps=\eps^5 Q+o(\eps^5)$, where 
\[
Q = \kappa_1\,\D_\tau u +\kappa_2\,   u \D_x u + \kappa_3\,  \D_x^3 u , 
\]
with $\kappa_1 = 2c_\pow$,
$\kappa_2=\pow(\pow+1)(\pow+2)\zeta_\pow^*$, and
$\kappa_3 = \frac1{12}\pow(\pow+1)\zeta_{\pow-2}^*$.
  \end{theorem}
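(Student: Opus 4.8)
The plan is to retrace the proof of Theorem~\ref{t:longwave}(i) with the extra factor $(-1)^{m-1}$ carried along, and to observe that essentially every series $\sum_m m^{-s}$ that occurred there is now replaced by $\sum_m (-1)^{m-1}m^{-s}$. Concretely, I would first redo the linearization: perturbing $x_j=j+\eps v_j$ in \eqref{e:sysalt} produces $\ddot v_j = \pow(\pow+1)\sum_{m\ge1}(-1)^{m-1}(v_{j+m}-2v_j+v_{j-m})/m^{\pow+2}$, so plane waves $e^{i(kj-\omega t)}$ give squared phase speed $\pow(\pow+1)\sum_{m\ge1}(-1)^{m-1}\sinc^2(\tfrac12 km)/m^\pow$, whose long-wave limit is $c_\pow^2=\pow(\pow+1)\zeta_\pow^*$. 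The key point making all of this legitimate for \emph{all} $\pow>0$ (not just $\pow>1$) is that $\zeta_s^*=\sum_{m\ge1}(-1)^{m-1}m^{-s}$ converges conditionally for $\re s>0$, so pairing consecutive terms turns the $d^{-\pow-2}$ force into an effective $d^{-\pow-3}$-type decay and the troublesome divergence of $\sum m^{-(\pow-2)}$ at $\pow\le3$ in case~(ii) simply does not arise.

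Second, I would repeat the expansion in the proof of Theorem~\ref{t:longwave} up to \eqref{d:Reps}, which is unchanged except that $L_\eps$ and $N_\eps$ in \eqref{d:Leps}--\eqref{d:Neps} now carry a factor $(-1)^{m-1}$ inside the sums. For the nonlinear term, the argument of Lemma~\ref{l:Neps} goes through verbatim: using $A_{\eps m}u+A_{-\eps m}u=2u+o_m(1)$ and $(A_{\eps m}u-A_{-\eps m}u)/(\eps m)=\D_xu+o_m(1)$ as in \eqref{e:diffAm}, the sum becomes $\eps^{2p+3}\sum_m(-1)^{m-1}m^{-\pow}(2u\D_xu+o_m(1))$, and dominated convergence (the partial sums of $\sum(-1)^{m-1}m^{-\pow}$ being bounded, one uses summation by parts / the alternating-series bound rather than absolute convergence) gives $N_\eps=\eps^{2p+3}(2\zeta_\pow^* u\D_xu)+o(\eps^{2p+3})$. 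For the linear term, the Taylor expansion $-(A_{\eps m}u-A_{-\eps m}u)/(\eps m)=\D_x^2v+\tfrac{(\eps m)^2}{12}(\D_x^4v+o_m(1))$ is as before, so $L_\eps=\eps^{p+2}\sum_m(-1)^{m-1}m^{-\pow}(\D_xu+\tfrac{\eps^2m^2}{12}(\D_x^3u+o_m(1)))=\eps^{p+2}\zeta_\pow^*\D_xu+\tfrac1{12}\eps^{p+4}\zeta_{\pow-2}^*\D_x^3u+o(\eps^{p+4})$. The leading $\eps^{p+2}\zeta_\pow^*\D_xu$ cancels against the $\eps^{p+2}c_\pow^2\D_xu$ from $\ddot x_j$ in \eqref{d:ddxj} precisely because $c_\pow^2=\pow\pow_1\zeta_\pow^*$; with $p=1$, $q=3$ the surviving acceleration term $2\eps^5c_\pow\D_\tau u$, the dispersive term $\pow\pow_1\cdot\tfrac1{12}\eps^5\zeta_{\pow-2}^*\D_x^3u$, and the nonlinear term $\pow\pow_2\cdot\eps^5\cdot2\zeta_\pow^*u\D_xu$ all appear at order $\eps^5$, yielding $R_\eps=\eps^5Q+o(\eps^5)$ with $\kappa_1=2c_\pow$, $\kappa_3=\tfrac1{12}\pow\pow_1\zeta_{\pow-2}^*=\tfrac1{12}\pow(\pow+1)\zeta_{\pow-2}^*$, and $\kappa_2=2\pow\pow_2\zeta_\pow^*=\pow(\pow+1)(\pow+2)\zeta_\pow^*$ since $\pow_2=\tfrac12(\pow+1)(\pow+2)$.

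The main obstacle — and the only genuinely new point relative to Theorem~\ref{t:longwave} — is justifying the dominated-convergence steps and the validity of the termwise Taylor expansions when the series defining $L_\eps$ and $N_\eps$ are merely conditionally, not absolutely, convergent, and when $\pow$ is allowed down to the range $(0,1]$ where even $\zeta_\pow^*$ sits outside the region of absolute convergence. For this I would replace the crude bound $\sum_m m^{-\pow}<\infty$ used implicitly in the original proof by an Abel-summation estimate: writing the tail sums $\sum_{m\ge M}(-1)^{m-1}a_m$ with $a_m$ monotone in $m$ (for $m$ large, $m^{-\pow}$, $m^{2-\pow}=m^2\cdot m^{-\pow}$ times the smooth remainder, etc.) and bounding them by the first neglected term, one controls the error uniformly and passes to the limit. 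One must also check that the $o_m(1)$ remainders, which are uniformly bounded in $m$, combine with the alternating coefficients to give $o(1)$ after summation — again an alternating-series estimate suffices. A small additional remark worth including is that pairing consecutive terms, $(x_{j+m}-x_j)^{-\pow-1}-(x_{j+m+1}-x_j)^{-\pow-1}\sim(\pow+1)(x_{j+m}-x_j)^{-\pow-2}$, exhibits directly why the effective long-range force decays like $d^{-\pow-2}$ and hence why the KdV (rather than nonlocal) scaling is the correct one for every $\pow>0$, which is the ``surprising'' phenomenon advertised in the introduction.
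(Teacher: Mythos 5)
Your bookkeeping is correct (the coefficients $\kappa_1,\kappa_2,\kappa_3$ and the cancellation of the sonic term all come out right), and for $\pow>3$ your argument is essentially the paper's, which does treat that range as a direct modification of case~(i) of Theorem~\ref{t:longwave}. But there is a genuine gap for $0<\pow\le 2$ — precisely the regime the theorem is designed to reach, including the experimentally motivated value $\pow=1.73$. Your dispersive coefficient comes from the termwise expansion
\begin{equation*}
L_\eps=\eps^{p+2}\sum_{m\ge1}(-1)^{m-1}m^{-\pow}\Bigl(\D_xu+\tfrac{\eps^2m^2}{12}\bigl(\D_x^3u+o_m(1)\bigr)\Bigr),
\end{equation*}
whose $\eps^{p+4}$ term requires summing $\sum_{m}(-1)^{m-1}m^{2-\pow}$. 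For $\pow\le2$ this series diverges outright: its terms do not tend to zero, so the alternating-series/summation-by-parts estimate you invoke (bounding a tail by its first neglected term) is inapplicable — that device needs terms decreasing to zero. The value $\zeta^*_{\pow-2}$ appearing in the theorem is defined for $\pow\le2$ only by analytic continuation, $\zeta^*_s=\zeta_s(1-2^{1-s})$, not as the sum of any convergent series; your assertion that the ``troublesome divergence \dots\ simply does not arise'' conflates the convergence region $\re s>0$ of $\zeta^*_s$ (which for the dispersive coefficient means $\pow>2$) with $\pow>0$. Even on $2<\pow\le3$, where the series converges conditionally, passing the limit through $\sum_m(-1)^{m-1}m^{2-\pow}o_m(1)$ needs control on the $m$-variation of the remainders, not just their uniform boundedness.

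What is actually required — and what the paper supplies — is a proof that the particular regularization of this divergent series produced by the lattice, namely $S_\pow(h)=\sum_{m\ge1}(-1)^{m-1}\bigl(1-\sinc^2(mh/2)\bigr)m^{-\pow}$ arising in the Fourier-side expression \eqref{e:Lhatalt}, satisfies $S_\pow(h)=\tfrac1{12}\zeta^*_{\pow-2}h^2+O(h^3)$ as $h\to0$ for \emph{every} $\pow>0$; this is Lemma~\ref{l:Spow}. The paper therefore runs the argument of case~(ii) of Theorem~\ref{t:longwave} in Fourier variables and proves the lemma in Appendix~\ref{a:alt} by computing the Mellin transform $\widetilde{S_\pow}(s)=\zeta^*_{\pow+s}\,\widetilde{f_\pow}(\pow+s)$ on a vertical strip, continuing it meromorphically, and shifting the inversion contour past the simple pole at $s=-2$ to collect the residue $\tfrac1{12}\zeta^*_{\pow-2}h^{2}$. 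Since different regularizations of a divergent series need not agree, this identification is a substantive step, not bookkeeping, and it is the ingredient your proposal is missing. (Your treatment of the nonlinear term has a similar but milder issue for $0<\pow\le1$, where $\sum m^{-\pow}$ diverges absolutely; there the alternating series for $\zeta^*_\pow$ does converge, so a genuine summation-by-parts argument can be made to work, but it must be written out.)
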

  
  When $\pow>3$ the proof is a simple modification
 % \footnote{\fix{Ouch, for $2<\pow\le 3$ the sum $\sum (-1)^{m-1}o_m(1)m^{2-\pow}$ in \eqref{L:kdv} may not converge. }}
  of the arguments above for proving part (i) of Theorem~\ref{t:longwave}. 
  For $0<\pow\le 3$ the proof is a modification of the proof of part (ii), with the 
  only essential change being that the expression in \eqref{e:Lhateps} now takes the form
  \begin{align}\label{e:Lhatalt}
  \widehat{L_\eps}(k) &= \eps^3 ik\hat u(k)\Bigl( \zeta_\pow^* - S_\pow(\eps|k|)\Bigr),
  \\ \label{d:Spow}
  S_\pow(h) &= \sum_{m=1}^\infty \frac{1-\sinc^2(mh/2)}{m^\pow}(-1)^{m-1}.
  \end{align}
  Then based on the following lemma, one finds that \eqref{e:Lhat2} changes to
  \begin{equation}
  \widehat L_\eps(k) = \eps^3\zeta_\pow^*\widehat{\D_x u}(k) + 
  \tfrac1{12}\eps^5\zeta_{\pow-2}^* \widehat{\D_x^3 u}(k) (1+o_k(1)) ,
  \end{equation}
  and the rest of the proof goes as before. 
  \begin{lemma}\label{l:Spow} For any $\pow>0$, we have 
  $S_\pow(h) = \tfrac1{12}\zeta_{\pow-2}^*\, h^2 + O(h^3)$ as $h\to0$.
  \end{lemma}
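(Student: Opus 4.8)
The plan is to remove the oscillatory factor $1-\sinc^2(mh/2)$ in favor of a cosine, identify the resulting Dirichlet series with the real part of a polylogarithm, and read off the small-$h$ expansion from the analyticity of the polylogarithm at $z=-1$.

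First I would use the identity $\sinc^2(x/2)=2(1-\cos x)/x^2$ to rewrite the summand, which gives
\[
 S_\pow(h)=\zeta_\pow^*-T_\pow(h),\qquad
 T_\pow(h):=\sum_{m=1}^\infty(-1)^{m-1}\frac{\sinc^2(mh/2)}{m^\pow}
 =\frac{2}{h^2}\bigl(\zeta_{\pow+2}^*-C_{\pow+2}(h)\bigr),
\]
where $C_s(h):=\sum_{m\ge1}(-1)^{m-1}\cos(mh)\,m^{-s}$. For $\pow>0$ each series here converges --- absolutely when $\pow>1$, and otherwise by Dirichlet's test after subtracting the absolutely convergent contribution of $\sinc^2$ --- so the rearrangement $S_\pow=\zeta_\pow^*-T_\pow$ is legitimate, while $C_{\pow+2}$ converges absolutely since $\pow+2>2$. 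Next I would write $C_s(h)=-\re\,\mathrm{Li}_s(-e^{ih})$, where $\mathrm{Li}_s$ is the polylogarithm, whose only branch point in $z$ is $z=1$; hence $z\mapsto\mathrm{Li}_s(z)$ is analytic near $z=-1$ for every $s$, so $h\mapsto C_s(h)$ is real-analytic near $h=0$ with an even Taylor expansion. Its coefficients I would compute by differentiating under the sum for $\re s$ large,
\[
 \partial_h^{2k}\,\mathrm{Li}_s(-e^{ih})\big|_{h=0}=(-1)^k\sum_{m\ge1}(-1)^m m^{2k-s}=(-1)^{k+1}\zeta_{s-2k}^*,
\]
and this identity then persists for all $s$ by analytic continuation in $s$, both sides being entire functions of $s$; this is exactly what lets $\zeta_{\pow-2}^*$ enter through its continued (Dirichlet eta) value once $\pow\le2$. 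Consequently $C_s(h)=\zeta_s^*-\tfrac{h^2}{2}\zeta_{s-2}^*+\tfrac{h^4}{24}\zeta_{s-4}^*+O(h^6)$, and putting $s=\pow+2$ yields $T_\pow(h)=\zeta_\pow^*-\tfrac{h^2}{12}\zeta_{\pow-2}^*+O(h^4)$, whence $S_\pow(h)=\tfrac{h^2}{12}\zeta_{\pow-2}^*+O(h^4)$, which is stronger than claimed.

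The one place that deserves care --- and what I would regard as the main obstacle --- is the bookkeeping in the regime $0<\pow\le1$, where $\zeta_\pow^*$ and $T_\pow(h)$ converge only conditionally and $\zeta_{\pow-2}^*$ is not given by a convergent series. The polylogarithm route sidesteps this because the $h$-expansion of $C_s$ rests on genuine analyticity in $h$ rather than on termwise differentiation of a slowly convergent series, and the appearance of the continued value of $\zeta_{\pow-2}^*$ is then automatic from analytic continuation in $s$. As an alternative to invoking $\mathrm{Li}_s$, one could instead run the Mellin/$\Gamma$-function argument of Remark~\ref{r:eta} on the cosine series $\zeta_{\pow+2}^*-C_{\pow+2}(h)$, but the polylog argument seems the most economical.
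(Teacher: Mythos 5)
Your proof is correct, and it reaches the conclusion by a genuinely different route from the paper. The paper works with the Mellin transform of $S_\pow$ itself: after pairing consecutive terms to get the transform defined on a strip, it factors $\widetilde{S_\pow}(s)=\zeta^*_{\pow+s}\,\widetilde{f_\pow}(\pow+s)$ using the $\Gamma$-function computation from Remark~\ref{r:eta}, then applies Mellin inversion and shifts the contour past the pole at $s=-2$, which requires vertical-strip decay estimates for $\Gamma$ and $\zeta$; the choice $c=-3$ gives the $O(h^3)$ remainder. You instead use the elementary identity $1-\sinc^2(x/2)=1-2(1-\cos x)/x^2$ to trade the problem for the absolutely convergent cosine series $C_{\pow+2}(h)=-\re\mathrm{Li}_{\pow+2}(-e^{ih})$, whose real-analyticity at $h=0$ is immediate from the analyticity of the polylogarithm away from its branch point at $z=1$, and you identify the Taylor coefficients with the continued values $\zeta^*_{s-2k}$ by analytic continuation in $s$. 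This is shorter, avoids all the contour and growth estimates, and yields the sharper remainder $O(h^4)$; the one ingredient you should make explicit (or cite) is the joint analyticity of $\mathrm{Li}_s(z)$ in $(s,z)$ near $z=-1$, which is what makes $s\mapsto\partial_h^{2k}\mathrm{Li}_s(-e^{ih})\big|_{h=0}$ entire and legitimizes the continuation step that produces $\zeta^*_{\pow-2}$ when $\pow\le 2$. It is also worth noting that both arguments expose the same mechanism for why alternation rescues the KdV scaling for all $\pow>0$: in the Mellin picture the pole of $\zeta$ at $1$ (which produces the nonlocal $|k|^\pow$ term in the non-alternating case) is absent for $\zeta^*$, and in your picture this is precisely the statement that $-e^{ih}$ stays away from the branch point of $\mathrm{Li}_s$ at $z=1$.
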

  We prove this lemma in Appendix~\ref{a:alt} using 
  the inversion formula for the Mellin transform and path deformation;
  see~\cite{Flajolet} for this method.
} %end HIDE

%------------------------
\section{Periodic Calogero-Moser-Sutherland waves}\label{s:periodic}
\subsection{B\"acklund transforms for Calogero-Sutherland systems}
Our strategy to prove Theorem~\ref{t:periodic} 
involves equations for Calogero-Sutherland systems introduced by 
Wojciechowski \cite{Wojciechowski.82} that
he called an analogue of the B\"acklund transformations known for other integrable systems.
The equations couple $N$ particle positions $x_1,\ldots,x_N\in\C$ 
with $M$ ``dual'' particle positions $y_1,\ldots,y_M\in\C$. 
In the case we will use, they take the form
\begin{align}
i\dot x_j &=  
\sum_{\substack{k=1\\k\ne j}}^N 
a\cot a(x_j-x_k) 
 - \sum_{m=1}^M a\cot a(x_j-y_m) \,,
 \label{e:perback1}
\\
i\dot y_n &= \sum_{k=1}^N a\cot a(y_n-x_k)   
  - \sum_{\substack{m=1\\m\ne n}}^M a\cot a(y_n-y_m) \,.
 \label{e:perback2}
\end{align}
For any solution of these coupled equations,
it is well known (but see Appendix~\ref{a:backCS} for an efficient proof)
%it is well known (but see the Supplementary Material for an efficient proof)
that $x_1,\ldots,x_N$ and $y_1,\ldots,y_M$ separately solve decoupled Calogero-Sutherland systems,
with 
\begin{align}
\label{e:CSx}
\ddot x_j &= 2a^3 \sum_{\substack{k=1\\k\ne j}}^N 
\cos a(x_j-x_k) \sin^{-3}a(x_j-x_k) \,,
\\
\label{e:CSy}
\ddot y_n &= 2a^3 \sum_{\substack{m=1\\m\ne n}}^M 
\cos a(y_n-y_m) \sin^{-3}a(y_n-y_m) \,.
\end{align}

%Several authors %\cite{Abanov.Bettelheim.ea.09,Stone.Anduaga.ea.08,Philip.2019} 
Several authors \cite{AbanovEA2011,Stone.Anduaga.ea.08,Philip.2019} 
refer to solutions of the coupled system \eqref{e:perback1}--\eqref{e:perback2}
as providing ``$M$-soliton'' solutions of the Calogero-Moser-Sutherland system \eqref{e:CSx}.
Possibly this terminology is motivated by the connection, through pole dynamics,
with rational $N$-soliton solutions of the 
Benjamin-Ono equations in the case when $N=M$ and $y_j=\bar x_j$ 
and when the function $\phi(r)=a\cot ar$ is replaced by $\phi(r)=1/r$ above~\cite{Case.78}.
In this rational case when $\phi(r)=1/r$ a harmonic force term is sometimes included. 

\subsection{Steps to prove Theorem~\ref{t:periodic}}
Throughout this section we assume $a=\frac\pi{L}$ with $L=N-1\in\N$.
The proof of Theorem~\ref{t:periodic} will have four main steps:
\begin{enumerate}
\item First, we show that for any $\kappa>1$, the relation \eqref{e:pwave},
together with the periodicity property
\begin{equation}\label{e:phiN}
\vp(s+N)=\vp(s)+1 \,,
\end{equation}
determines a unique strictly increasing real analytic function $\vp(s)$
on the line, and that \eqref{e:xvp} then defines lattice particle positions 
$x_j(t)$ for $j\in\Z$ with the desired periodic wave symmetries 
in \eqref{e:wavesymmetry} and \eqref{e:period}.

\item Next, we infer that corresponding points on the unit circle in $\C$, given by 
\begin{equation}\label{d:zj}
 z_j = e^{2ia x_j}  \,, \quad j=1,\ldots,N,
\end{equation}
comprise $N$ distinct roots of a certain polynomial of degree $N$, given by 
\begin{equation}\label{d:polyP}
   P(z;\ict):= z^N - \nu\ict z^{N-1} + \nu z - \ict\,, 
   \quad \nu= \frac{\kappa-1}{\kappa+1}, \quad \ict = e^{2iact} .
\end{equation}
\item Third, under the assumption that $\nu$ and $c$
are related as in \eqref{e:pwave_eta}, we deduce that 
the B\"acklund transform equations \eqref{e:perback1}--\eqref{e:perback2} hold, 
with $M=1$ and with $y_1(t)=y_0(t)+L$ where 
\begin{equation}
y_0(t) = ct - i\bshift,
\end{equation}
for a certain value of $\bshift$ determined by $c$ and $N$.
\item 
The final step is simply to deduce that thus
$x_1,\ldots,x_N$ satisfy the Calogero-Sutherland equations~\eqref{e:CSx},
and therefore the $x_j$ (for $j\in\Z$) form an $N$-periodic wave solution
of the Calogero-Moser system~\eqref{e:cm1}.
\end{enumerate}

We remark that our discovery of the determining formula 
\eqref{e:pwave} for the wave profile proceeded by seeking traveling-wave
solutions for the B\"acklund transform equations, and ignoring the real part of \eqref{e:perback1}.
We omit this heuristic derivation as it is somewhat involved and would muddy the logic of the 
rigorous proof.

\subsection{Profile and wave symmetries}
\begin{lemma}
Let $\kappa>1$ be arbitrary. Then there is a unique strictly increasing real analytic function 
$\vp\colon\R\to\R$ such that the relation~\eqref{e:pwave} holds, 
together with the periodic-shift condition \eqref{e:phiN}.
\end{lemma}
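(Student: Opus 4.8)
The plan is to analyze the relation \eqref{e:pwave} as an implicit definition of $\vp$ on the fundamental interval and then extend periodically. First I would treat \eqref{e:pwave} on the open interval $s\in(-N/2,N/2)$, where I expect $\vp$ to take values in $(-1/2,1/2)$, so that both $\tan\pi\vp$ and $\tan(a(s-\vp))$ are finite except possibly at endpoints. Rewriting \eqref{e:pwave} as $F(s,\vp):=\kappa\sin(a(s-\vp))\cos(\pi\vp) - \cos(a(s-\vp))\sin(\pi\vp)=0$ converts it to an analytic equation in two real variables, avoiding the poles of $\tan$; the implicit function theorem then applies wherever $\D_\vp F\ne 0$.

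The key steps, in order, are: (1) \emph{Solve for $s$ in terms of $\vp$.} For $\vp\in(-1/2,1/2)$, solve \eqref{e:pwave} for $s$, obtaining $s = \vp + \frac1a\arctan\!\bigl(\frac1\kappa\tan\pi\vp\bigr)$, choosing the branch of $\arctan$ valued in $(-\pi/2,\pi/2)$ so that $s\to\mp N/2$ as $\vp\to\mp1/2$ (using $a=\pi/L$, $L=N-1$, so $\frac1a\cdot\frac\pi2 = \frac{N-1}2$ and $s\to \frac12 + \frac{N-1}2 = \frac N2$). This exhibits $s$ as an explicit real-analytic function of $\vp$ on $(-1/2,1/2)$. (2) \emph{Show this map is a strictly increasing bijection $(-1/2,1/2)\to(-N/2,N/2)$.} Differentiate: $\frac{ds}{d\vp} = 1 + \frac1a\cdot\frac{(\pi/\kappa)\sec^2\pi\vp}{1+\kappa^{-2}\tan^2\pi\vp}>0$ since every term is positive (here $\kappa>1$ enters only to keep things finite and positive; positivity holds for any $\kappa>0$). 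Monotonicity plus the endpoint limits give the bijection, hence a well-defined strictly increasing real-analytic inverse $\vp=\vp(s)$ on $(-N/2,N/2)$ by the analytic inverse function theorem. (3) \emph{Extend to $\R$ via \eqref{e:phiN}.} Define $\vp$ on all of $\R$ by $\vp(s+N)=\vp(s)+1$; I must check the pieces match up at the points $s\in\frac N2+N\Z$ to give a $C^\omega$ (indeed analytic) function with no loss of monotonicity. At $s=N/2$ the limit of $\vp$ is $1/2$ from the left and, by the shift relation, $-1/2 + 1 = 1/2$ from the right, so $\vp$ is continuous; for analyticity across these nodes I would note that the relation \eqref{e:pwave} together with the shift $\vp\mapsto\vp+1$, $s\mapsto s+N$ is invariant because $\tan$ has period $\pi$ and $aN\cdot\tfrac{?}{}$—more precisely $\tan(a(s+N-\vp-1)) = \tan(a(s-\vp) + a(N-1)) = \tan(a(s-\vp)+\pi) = \tan(a(s-\vp))$ and $\tan(\pi(\vp+1))=\tan\pi\vp$, so the \emph{same} analytic relation $F=0$ with the \emph{same} nonvanishing $\D_\vp F$ holds in a full neighborhood of each node, and the implicit function theorem gives one analytic solution branch there; uniqueness of that branch forces the periodic extension to coincide with it. Strict monotonicity on all of $\R$ then follows since $\vp' >0$ on each fundamental interval and at each node.

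\textbf{Main obstacle.} The routine part is the explicit inversion and the positivity of $ds/d\vp$. The step requiring genuine care is (3): verifying that the periodically-extended function is analytic (not merely $C^0$ or $C^1$) across the gluing nodes $s\in \frac N2 + N\Z$, i.e.\ that the branch of $\vp$ produced by the shift relation agrees with the local analytic solution of $F(s,\vp)=0$ there. This hinges on the period compatibility $a(N-1)=\pi$ (so that both sides of \eqref{e:pwave} are genuinely invariant under $(s,\vp)\mapsto(s+N,\vp+1)$) together with $\D_\vp F\ne0$ at the node, which one checks directly from $\frac{ds}{d\vp}\big|_{\vp=1/2}$ being finite and nonzero—equivalently $\D_\vp F\neq 0$ after clearing the $\sec^2$ singularity by working with $F$ rather than the $\tan$ form. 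Once analyticity and monotonicity of $\vp$ on $\R$ are in hand, substituting into \eqref{e:xvp} immediately yields $x_j(t)=j-\vp(j-ct)$ with $x_{j+1}(t+\tau)=x_j(t)+1$ for $\tau=1/c$ (the wave symmetry \eqref{e:wavesymmetry} with $h=1$) and $x_{j+N}(t)=x_j(t)+L$ (the periodicity \eqref{e:period}), since $\vp(s+N)=\vp(s)+1$ gives $x_{j+N}(t) = (j+N) - \vp(j+N-ct) = j+N - \vp(j-ct) - 1 = x_j(t) + (N-1) = x_j(t)+L$, completing the first step of the four-step program.
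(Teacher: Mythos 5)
Your argument is correct: inverting the relation to get $s$ as an explicit increasing analytic function of $\vp$ on a fundamental domain, and then extending by the shift, does establish the lemma, and your verification that $F(s,\vp)=\kappa\sin(a(s-\vp))\cos\pi\vp-\cos(a(s-\vp))\sin\pi\vp$ is invariant under $(s,\vp)\mapsto(s+N,\vp+1)$ (using $aL=\pi$) together with $\D_\vp F=-(\kappa\pi+a)\ne0$ at the nodes correctly handles the gluing. The paper takes the same basic route --- build $\hat s(\vp)$ and invert --- but sidesteps entirely the step you single out as the main obstacle: instead of using the principal branch of $\arctan$ on $(-\tfrac12,\tfrac12)$ and patching, it defines $y=\hat y(\vp)$ globally by the integral
\[
y=\int_0^\vp \frac{\kappa\,d\vp'}{\kappa^2\cos^2\pi\vp'+\sin^2\pi\vp'},
\]
whose integrand is positive, $1$-periodic, analytic on all of $\R$, and has integral $1$ over a period. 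This makes $\hat y$ (hence $\hat s=\vp+L\hat y$) manifestly real analytic, odd, strictly increasing, and satisfying $\hat y(\vp+1)=\hat y(\vp)+1$ in one stroke, with the relation $\kappa\tan\pi y=\tan\pi\vp$ recovered by the substitution $\kappa w=\tan\pi\vp$; no node-by-node implicit function theorem argument is needed. Your version costs a little more bookkeeping at the gluing points but is equally valid; the paper's integral formula is the cleaner way to get global analyticity, and is worth knowing as a standard trick for ``unwrapping'' a multivalued inverse. One small caveat applying to both proofs: as literally stated, uniqueness requires a normalization such as $\vp(0)=0$ (or oddness, as in Theorem~\ref{t:periodic}), since $\vp+L$ satisfies the same relation and shift condition; your restriction of $\vp$ to $(-\tfrac12,\tfrac12)$ on the fundamental interval plays the same normalizing role as the paper's choice $\hat y(0)=0$.
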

\begin{proof}
We first determine $y$ (later $=(s-\vp)/L$) as a function of $\vp$ so that 
\begin{equation}\label{e:vp_y}
\kappa\tan \pi y = \tan\pi\vp  \quad\text{and}\quad
\kappa\cot\pi\vp = \cot\pi y.
\end{equation}
One checks that $y=\hat y(\vp)$ can be defined on $\R$ by direct integration from 
\begin{equation}\label{e:yintvp}
y = \int \frac{\kappa\,d\vp}{\kappa^2\cos^2 \pi \vp + \sin^2\pi \vp}, \qquad y(0)=0,
\end{equation}
after substituting $\kappa w=\tan\pi\vp$.
Clearly $\hat y\colon\R\to\R$ is odd, strictly increasing, surjective and real analytic, 
and moreover $\hat y(\vp+1)=\hat y(\vp)+1$ for all $\vp$.

Next, with $s=\hat s(\vp)$ defined by $s= \vp + L y$, clearly relation~\eqref{e:pwave}
holds, and moreover $\hat s$ is odd, strictly increasing, surjective and real analytic, 
with $\hat s(\vp+1)=\hat s(\vp)+N$ for all $\vp$. Upon inverting, we find $\vp$ as a function 
of $s$ with all the properties claimed.
\end{proof}

\begin{corollary}\label{cor:xj}
Let $c>0$.  With $x_j(t)$ given by~\eqref{e:xvp} for all $j\in\Z$, 
the traveling-wave symmetry condition~\eqref{e:wavesymmetry} 
(with $h=1=c\tau$)
and the periodic-wave symmetry condition~\eqref{e:period} both hold.
Moreover, for all $j\in\Z$, 
\[
x_j<x_{j+1}<\ldots<x_{j+N}=x_j+L. 
\]
\end{corollary}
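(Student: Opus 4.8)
The plan is to verify all three assertions by direct computation from the formula $x_j(t) = j - \vp(j-ct)$ in~\eqref{e:xvp}, using only the two structural properties of $\vp$ established in the previous lemma: that $\vp$ is strictly increasing on $\R$, and that it satisfies the shift relation $\vp(s+N) = \vp(s)+1$ from~\eqref{e:phiN}.

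First I would check the traveling-wave symmetry~\eqref{e:wavesymmetry}. With $h=1$ and $\tau=1/c$, so that $c\tau=1$, substitution gives
\[
x_{j+1}(t+\tau) = (j+1) - \vp\bigl((j+1)-ct-c\tau\bigr) = (j+1) - \vp(j-ct) = x_j(t) + 1 ,
\]
as required. For the periodic-wave symmetry~\eqref{e:period}, the shift relation together with $L = N-1$ yields
\[
x_{j+N}(t) = (j+N) - \vp\bigl((j-ct)+N\bigr) = (j+N) - \bigl(\vp(j-ct)+1\bigr) = x_j(t) + (N-1) = x_j(t) + L .
\]

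For the monotone chain, since the identity $x_{j+N}(t) = x_j(t)+L$ has just been shown, it suffices to prove $x_i(t) < x_{i+1}(t)$ for every $i$. Writing $s = i - ct$, this inequality is equivalent to $\vp(s+1) - \vp(s) < 1$. To obtain this, partition $[s,s+N]$ into the $N$ unit subintervals $[s+k,\,s+k+1]$, $k=0,\dots,N-1$, and telescope:
\[
\sum_{k=0}^{N-1}\bigl(\vp(s+k+1)-\vp(s+k)\bigr) = \vp(s+N)-\vp(s) = 1 .
\]
Each summand is strictly positive because $\vp$ is strictly increasing, and since $N>1$ there are at least two summands; hence every one of them, in particular $\vp(s+1)-\vp(s)$, is strictly less than $1$. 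This gives $x_j(t) < x_{j+1}(t) < \cdots < x_{j+N}(t) = x_j(t)+L$ for all $j\in\Z$ and all $t$.

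I do not anticipate a real obstacle here; the one place requiring a moment's care — and the only point where the hypothesis $N>1$ is used essentially — is the last step, since strict monotonicity of $\vp$ by itself gives only $\vp(s+1)-\vp(s)>0$, and it is the combination with the shift relation (the unit interval being a proper subinterval of a full period) that forces the increment to stay strictly below $1$. The rest is routine substitution.
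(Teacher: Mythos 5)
Your proof is correct, and the symmetry verifications coincide with the paper's (which simply declares them ``easy to check''). Where you genuinely diverge is in the monotonicity step. The paper proves $x_j<x_{j+1}$ by going back to the construction of $\vp$ in the preceding lemma: there $s=\vp+L\hat y(\vp)$, so $s-\vp(s)=L\hat y(\vp(s))$ is a composition of strictly increasing functions and hence strictly increasing in $s$, which gives monotonicity of $j\mapsto j-\vp(j-ct)$ for all real shifts, not only integer ones. You instead use only the two properties stated in the lemma's conclusion --- strict monotonicity of $\vp$ and the shift relation $\vp(s+N)=\vp(s)+1$ --- and deduce $\vp(s+1)-\vp(s)<1$ by telescoping the unit increments over one full period and noting that $N\ge 2$ positive terms summing to $1$ must each be less than $1$. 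Your argument is more self-contained (it would apply to any strictly increasing $\vp$ satisfying \eqref{e:phiN}, without reference to the auxiliary function $\hat y$), at the mild cost of only controlling integer shifts, which is all the corollary requires; the paper's one-line argument buys the slightly stronger statement that $s\mapsto s-\vp(s)$ is increasing on all of $\R$, a fact it reuses implicitly elsewhere. You also correctly flag that $N>1$ is the essential hypothesis in your route; note it is guaranteed here since the section fixes $L=N-1\in\N$.
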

\begin{proof} 
Observe $x_j-ct = j+s - \vp(j+s)$ where $s=-ct$.  
The symmetry~\eqref{e:wavesymmetry} is easy to check.  
Also, $x_j$ is strictly increasing in $j$ since $s-\vp(s) =L \hat y(\vp(s))$  is strictly increasing in $s$.
And, \eqref{e:phiN} implies $x_{j+N}=x_j+L$, hence the result.  
\end{proof}

\subsection{Periodic waves and polynomial roots}
\begin{lemma}\label{l:roots}
Let $z_j=e^{2ia x_j}$ for all $j\in\Z$. Then for all real $t$,
the values
$z_1(t),\ldots,z_N(t)$ are distinct and comprise all $N$ roots of the polynomial 
\[
   P(z;\ict):= z^N - \nu\ict z^{N-1} + \nu z - \ict\,, 
   \quad\text{with}\quad \nu= \frac{\kappa-1}{\kappa+1}, \quad \ict = e^{2iact} .
\]
\end{lemma}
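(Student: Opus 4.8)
The plan is to show that each $z_j(t)$ is a root of $P(\,\cdot\,;\ict)$, and then to argue that these $N$ roots exhaust the zero set. For the first part, I would work directly from the determining relation \eqref{e:pwave}. Writing $s=-ct$, Corollary~\ref{cor:xj} gives $x_j - ct = j + s - \vp(j+s)$, and setting $w_j := a(x_j - \vp_j)$ with $\vp_j := \vp(j+s)$ we have, from \eqref{e:vp_y} and the substitution $y_j=(j+s-\vp_j)/L$, the two relations $\kappa\tan(a(x_j-ct-\ldots))$ — more precisely $\kappa\tan(a y_j L)=\tan\pi\vp_j$ where $a y_j L = a(j+s-\vp_j) = a\cdot\text{(integer shift)} + (x_j\cdot a - \pi\vp_j - \text{const})$. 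The cleanest route is to rewrite \eqref{e:pwave} purely in terms of the exponentials $z_j=e^{2iax_j}$ and $\ict=e^{2iact}$. Using $\tan\theta = -i(e^{2i\theta}-1)/(e^{2i\theta}+1)$, the relation $\kappa\tan(a(s-\vp)) = \tan\pi\vp$ with $a=\pi/L$, $L=N-1$, after clearing denominators and using $\phi_N$-periodicity \eqref{e:phiN} to identify $e^{2i\pi\vp_j}$ in terms of $z_j^{N}\ict^{-1}$ or similar, should collapse to a polynomial identity. Specifically, I expect $e^{2ia(s-\vp_j)}$ and $e^{2i\pi\vp_j}$ to be expressible as ratios of the form involving $z_j$ and $\ict$: since $2a(j+s-\vp_j) = 2a j + 2a s - 2a\vp_j$ and $2\pi\vp_j = 2aL\vp_j$, one gets $e^{2i\pi\vp_j} = e^{2iaL\vp_j}$ while $e^{2ia(j-\vp_j)} = z_j e^{-2iax_{\text{shift}}}\cdots$ — tracking the bookkeeping with $x_j = j-\vp_j$ (taking $h=1$) gives $z_j = e^{2ia(j-\vp_j)}$, hence $e^{2ia\vp_j} = e^{2iaj}/z_j$ and $e^{2i\pi\vp_j}=(e^{2iaj}/z_j)^{L}$. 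Then \eqref{e:pwave} becomes an algebraic relation among $z_j$, $\ict$, $e^{2iaj}$ which, using $e^{2iajN} = e^{2\pi i j}=1$ and $L=N-1$, simplifies to $P(z_j;\ict)=0$ with $\nu=(\kappa-1)/(\kappa+1)$.

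The second part — that $z_1,\dots,z_N$ are \emph{distinct} and therefore constitute \emph{all} the roots of the degree-$N$ polynomial $P$ — follows from two observations. Distinctness: by Corollary~\ref{cor:xj}, $x_1<x_2<\dots<x_N<x_1+L$, so the arguments $2ax_j$ lie in an interval of length $2aL=2\pi$ and are strictly increasing, hence $z_j = e^{2iax_j}$ are $N$ distinct points on the unit circle. Since $P$ has degree exactly $N$ (the leading coefficient is $1$), $N$ distinct roots account for all of them, counted with multiplicity, and in particular all roots are simple. One should also check $P\not\equiv 0$ and that the constant term $-\ict\ne 0$ so that $z=0$ is not a root, consistent with $z_j$ lying on the unit circle.

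The main obstacle I anticipate is the trigonometric-to-algebraic bookkeeping in the first part: carefully converting $\tan$ of the shifted arguments into the $z_j,\ict$ variables while correctly absorbing the integer index shifts via \eqref{e:phiN}, and verifying that the factors $e^{2iaj}$ genuinely drop out (using $e^{2iajN}=1$) rather than leaving a $j$-dependent residue — if they didn't cancel, the $z_j$ would not all satisfy the \emph{same} polynomial. A secondary subtlety is sign and branch conventions in the identity $\tan\theta=-i(e^{2i\theta}-1)/(e^{2i\theta}+1)$ and making sure the value of $\nu$ that emerges is exactly $(\kappa-1)/(\kappa+1)$ and not its reciprocal or negative; this is fixed by testing at a convenient value such as $t=0$ (where $\ict=1$ and $\vp$ is odd so $x_0=0$, $z$-values come in reciprocal pairs) to pin down the normalization. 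Once the polynomial identity is verified at the level of a single $j$ and shown to be $j$-independent, the counting argument closes the proof immediately.
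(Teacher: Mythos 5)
Your overall strategy coincides with the paper's: convert the profile relation \eqref{e:pwave} into a polynomial identity in the exponential variables, then use distinctness of the $z_j$ (from the strict monotonicity $x_1<\dots<x_N<x_1+L$ in Corollary~\ref{cor:xj}) together with the fact that $P(\cdot\,;\ict)$ is monic of degree $N$ to conclude that these $N$ points are all of the roots. The counting half of your argument is exactly the paper's. For the substitution half, the paper takes a slightly different route from yours: it first shows that $u=e^{2ia\vp(s)}$ satisfies $P(u;e^{2ias})=0$ for every $s$, by writing $\kappa\tan a(s-\vp)=\tan\pi\vp$ as $\kappa(u-\mict)(u^L+1)+(u+\mict)(u^L-1)=0$, and then transfers this to $\bar z_k=e^{2ia(\vp(s+k)-k)}$ via the quasi-homogeneity $P(e^{-2iak}u;e^{-2iak}\tau)=e^{-2iak}P(u;\tau)$ (valid precisely because $e^{2iaL}=1$), finishing with complex conjugation, which is legitimate since $\nu$ is real and $\overline{\mict}=\ict$. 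Your plan of substituting $z_j$ directly also works and is arguably cleaner: with $s=-ct$ and $\vp_j=\vp(j+s)$ one has $e^{2ia(j+s-\vp_j)}=z_j\mict$ and $e^{2i\pi\vp_j}=(e^{2ia\vp_j})^{L}=z_j^{-L}$, and clearing denominators in
\begin{equation*}
\kappa\,\frac{z_j\mict-1}{z_j\mict+1}=\frac{1-z_j^{L}}{1+z_j^{L}}
\end{equation*}
yields $P(z_j;\ict)=0$ with $\nu=(\kappa-1)/(\kappa+1)$, uniformly in $j$. One identity in your sketch needs fixing, however: $e^{2iajN}$ is \emph{not} equal to $e^{2\pi ij}=1$; since $a=\pi/L$ with $L=N-1$, the correct cancellation is $e^{2iajL}=e^{2\pi ij}=1$, so that $e^{2iajN}=e^{2iaj}$. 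Once you use $L$ rather than $N$ in that step, the $j$-dependent exponential factors genuinely drop out (this is the same fact the paper exploits as quasi-homogeneity of $P$), and the argument closes as you describe.
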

\begin{proof}
It follows from Corollary~\ref{cor:xj} that $z_{j+N}=z_j$ for all $j$ and that
$z_1,\ldots,z_N$ are distinct complex numbers on the unit circle.
Next, observe that relation \eqref{e:pwave} says that for all $s$,
\begin{equation}
\kappa\tan a(s-\vp) =    
    \frac{\kappa}i \, \frac{e^{2ias}-e^{2ia\vp}}{e^{2ias}+e^{2ia\vp}}
    = \frac{1}i \, \frac{e^{2i\pi\vp}-1}{e^{2i\pi\vp}+1} 
= \tan \pi\vp \,.
\end{equation}
In terms of $u=e^{2ia\vp(s)}=e^{2i\pi\vp/L}$ and noting $\mict=e^{2ias}$, this is equivalent to 
\[
0 = \kappa(u-\mict)(u^L+1) +  (u+\mict)(u^L-1),
\]
and again to the polynomial equation 
\begin{equation}\label{e:polyu}
    0 = u^{L+1} -\nu\mict u^L+\nu u-\mict = P(u;\mict) = P(e^{2ia\vp(s)};e^{2ias}).
\end{equation}
Since $\bar z_k = e^{2ia(\vp(s+k)-k)}$ and $e^{2iaL}=e^{2\pi i}=1$,
we find 
\begin{align*}
    P(\bar z_k;\mict) &= 
    P(e^{2ia(\vp(s+k)-k)};e^{2ias}) 
    \\ &= 
    e^{-2iak}  P(e^{2ia\vp(s+k)}; e^{2ia(s+k)}) = 0.
\end{align*}
Upon conjugation we obtain $P(z_k;\ict)=0$, for every $k\in\Z$ and $t\in\R$.
\end{proof}

\subsection{Validity of B\"acklund transform equations}
\begin{proposition}\label{p:backlund}
Let $c>\pi+a$,  let $y_0(t)=ct-i\bshift$, and assume $\mu=e^{2a\bshift}$ 
and 
\begin{equation}\label{d:mu}
\mu = \sqrt{ \frac{(c+a)^2-\pi^2}{(c-a)^2-\pi^2} }, \qquad 
\nu = \sqrt{\frac{c^2 - (\pi+a)^2}{c^2-(\pi-a)^2}} .
\end{equation}
Then the following B\"acklund transform equations hold:
\begin{align}
 \label{e:backCS1}
    i\dot x_j &=  %\sum_{k\ne j (N)}  
    \sum_{\substack{k=1\\k\ne j}}^N
    a\cot(a(x_j-x_k)) - a\cot(a(x_j-y_0))
    \,, \\
 \label{e:backCS2}
    i\dot y_0 &= %\sum_{k (N)} 
    \sum_{k=1}^N a\cot(a(y_0-x_k)) \,.
\end{align}
\end{proposition}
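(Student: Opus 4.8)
The plan is to verify the two B\"acklund identities \eqref{e:backCS1}--\eqref{e:backCS2} directly by exploiting the polynomial structure established in Lemma~\ref{l:roots}. The key device is that sums of the form $\sum_k a\cot a(w-x_k)$, where the $x_k$ are defined through $z_k=e^{2iax_k}$, can be rewritten as a logarithmic derivative of the polynomial $P(z;\sigma)$ evaluated at $Z=e^{2iaw}$. Concretely, for any $w$ not equal to any $x_k$, writing $Z=e^{2iaw}$ and using $P(z;\sigma)=\prod_{k=1}^N(z-z_k)$, one has
\begin{equation}
\sum_{k=1}^N a\cot a(w-x_k) = \sum_{k=1}^N a\,\frac{i(Z+z_k)}{Z-z_k} = -iaN + 2ia\,Z\,\frac{P'(Z;\sigma)}{P(Z;\sigma)}\,.
\end{equation}
So the right-hand sides of \eqref{e:backCS1} and \eqref{e:backCS2} become explicit rational functions of $z_j$ (resp.\ of $\mu^{-1}\sigma^{1/2}=e^{2iay_0}$, using $y_0=ct-i\beta$ so that $e^{2iay_0}=\sigma^{1/2}\mu^{-1}$ up to a fixed branch) and of $\sigma=e^{2iact}$.

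First I would handle \eqref{e:backCS2}, the equation for the dual particle, since it is the cleaner of the two. The left side is $i\dot y_0 = ic$, a constant. On the right side, set $Y:=e^{2iay_0}$; then $Y$ satisfies $Y^2=\sigma\mu^{-2}$ with $\mu^2$ as in \eqref{d:mu}, and $\dot Y = 2iac\,Y$ is consistent with this. Using the logarithmic-derivative identity above with the self-reciprocal polynomial $P(z;\sigma)=z^N-\nu\sigma z^{N-1}+\nu z-\sigma$, the requirement $i\dot y_0=\sum_k a\cot a(y_0-x_k)$ turns into a single algebraic relation
\begin{equation}
ic = -iaN + 2ia\,Y\,\frac{P'(Y;\sigma)}{P(Y;\sigma)}\,,
\end{equation}
i.e.\ $Y P'(Y;\sigma)/P(Y;\sigma) = (c+aN)/(2a)$. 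One then substitutes the explicit $P'$ and clears denominators; because $P$ is reciprocal of ``twisted'' type, $YP'(Y)/P(Y)$ simplifies, and the resulting polynomial identity in $Y$ should reduce — after using $Y^2=\sigma/\mu^2$ and $L=N-1$ — to precisely the two relations in \eqref{d:mu} defining $\mu$ and $\nu$ in terms of $c$ and $a$. This is where the specific formulas \eqref{d:mu} get pinned down, and it is essentially a computation with rational functions of one variable.

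Next I would turn to \eqref{e:backCS1}. Here the left side is $i\dot x_j$, which I would compute by differentiating the defining relation $P(z_j(t);\sigma(t))=0$ implicitly: $\dot z_j = -(\partial_\sigma P/\partial_z P)\dot\sigma$ and $\dot z_j = 2ia z_j \dot x_j$, giving $i\dot x_j$ as an explicit rational function of $z_j$ and $\sigma$. The right side, via the logarithmic-derivative identity applied with $w=x_j$, is the \emph{regularized} value $\lim_{Z\to z_j}\bigl(-iaN+2iaZP'(Z;\sigma)/P(Z;\sigma) - a\cot a(x_j-x_j)\text{-singularity}\bigr)$ minus the single term $a\cot a(x_j-y_0)$; removing the self-pole at $Z=z_j$ leaves a term involving $P''(z_j)/P'(z_j)$. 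So \eqref{e:backCS1} becomes the claim that for each root $z_j$ of $P(\cdot;\sigma)$,
\begin{equation}
-\frac{\partial_\sigma P(z_j;\sigma)}{2a\,z_j\,\partial_z P(z_j;\sigma)}\,\dot\sigma = \text{(that regularized logarithmic-derivative expression in }z_j,\sigma,Y\text{)}\,,
\end{equation}
with $\dot\sigma=2iac\sigma$. Since both sides are rational in $z_j$ and the equation must hold at all $N$ roots of $P(\cdot;\sigma)$, the natural strategy is to clear denominators and show that the difference of the two sides, multiplied by $P(z;\sigma)$, is a polynomial in $z$ that is divisible by $P(z;\sigma)$ — equivalently, to show the numerator of the difference is $\equiv 0 \bmod P(z;\sigma)$. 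This reduces everything to a polynomial-algebra identity which can be checked using the relations $\mu^2,\nu$ from step~one together with $Y^2=\sigma/\mu^2$.

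\textbf{Main obstacle.} The routine parts — the two logarithmic-derivative identities, implicit differentiation of $P(z_j;\sigma)=0$ — are mechanical. The real work, and the step I expect to be the crux, is the polynomial identity in step three: verifying that the numerator of the difference of the two sides of \eqref{e:backCS1} is divisible by $P(z;\sigma)$. The delicate points are (a) correctly handling the removal of the self-interaction pole (the $P''/P'$ term) so that the ``regularized'' right-hand side is the right rational function, and (b) the bookkeeping of the $Y$-dependent term $a\cot a(x_j-y_0)$, which introduces the factor $Y=\sigma^{1/2}/\mu$ with its branch; one wants the final identity to be rational in $\sigma$, not $\sigma^{1/2}$, which forces the specific form of $\mu$ in \eqref{d:mu} and is the mechanism by which the hypothesis $\mu=e^{2a\beta}$ with \eqref{d:mu} enters. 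The condition $c>\pi+a$ is what makes $\nu,\mu$ real and positive (so $\beta$ is real and $y_0$ genuinely sits off the real axis), guaranteeing $x_j-y_0$ stays away from the poles of $\cot$; I would note this at the outset and otherwise work formally with the algebraic relations.
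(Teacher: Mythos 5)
Your overall strategy for \eqref{e:backCS1} is essentially the paper's: convert the cotangent sums to logarithmic derivatives of $P(z;\sigma)$, remove the self-pole to produce the $P''(z_j)/P'(z_j)$ term, obtain $i\dot x_j$ by implicit differentiation of $P(z_j(t);\sigma(t))=0$, and then check that the cross-multiplied difference vanishes modulo $P(z;\sigma)$. Two problems, though. First, a concrete error: with $y_0=ct-i\beta$ and $\mu=e^{2a\beta}$ one has $e^{2iay_0}=e^{2iact}e^{2a\beta}=\sigma\mu$ (as in \eqref{e:cot_ident}), not $\sigma^{1/2}\mu^{-1}$. No square root of $\sigma$ ever appears, so the ``branch'' difficulty you flag under (b) is illusory; and if you carried $Y=\sigma^{1/2}/\mu$ through the algebra, neither identity would close. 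Second, and more importantly, the step you yourself identify as the crux --- verifying that the numerator of the difference of the two sides of \eqref{e:backCS1} is divisible by $P(z;\sigma)$ --- is exactly where the hypotheses \eqref{d:mu} must enter, and you do not carry it out; as written the proposal is a plan rather than a proof. In the paper this step is done by reducing to \eqref{e:back3}, cross-multiplying, using $aL=\pi$, and observing that \eqref{d:mu} is equivalent to the factored relations $\mu\nu=(c+a+\pi)/(c-a+\pi)$ and $\mu/\nu=(c+a-\pi)/(c-a-\pi)$ from \eqref{e:munu}, after which the entire expression collapses to $(c+\pi+a)P(z;\sigma)=0$.

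On \eqref{e:backCS2} your route genuinely differs from the paper's and is viable once $Y=\sigma\mu$ is corrected: demanding $YP'(Y)/P(Y)=(c+aN)/(2a)$ identically in $\sigma$ forces precisely the two relations in \eqref{e:munu}, so this computation would even \emph{derive} \eqref{d:mu} rather than merely verify it. The paper instead avoids a second polynomial computation entirely: summing \eqref{e:backCS1} over $j$ kills the double cotangent sum by antisymmetry, reducing \eqref{e:backCS2} to $c=\sum_k\dot x_k$, which follows from the constant term $P(0)=-\sigma=(-1)^N\prod_k z_k$. That shortcut is worth adopting.
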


Before starting the proof proper, we express equation \eqref{e:backCS1}
in terms of the variables $z_j$ using the identities
\begin{equation}\label{e:cot_ident}
\cot(x-y)=i \frac{e^{2ix}+e^{2iy}}{e^{2ix}-e^{2iy}} \,,\qquad
e^{2iay_0}=e^{2iact}e^{2a\bshift}=\sigma\mu.
\end{equation}
Then \eqref{e:backCS1} is equivalent to 
\begin{align}
\label{e:back_zj1}
\frac1{2ia}\frac{\dot z_j}{z_j}  &= 
    a \sum_{\substack{k=1\\k\ne j}}^N \frac{z_j+z_k}{z_j-z_k} 
    - a\frac{z_j+\ict\mu}{z_j-\ict\mu} \,.
\end{align}
The sum can be expressed in terms of $z_j$ alone, in terms of $P(z)=P(z;\ict)$:

\begin{lemma} %Recall the roots of $P(z):=P(z;\ict)$ are $z_j=e^{2iax_j}$, $j=1,\ldots,N$.
For each $j$ it holds that
\begin{align*}
    \sum_{\substack{k=1\\k\ne j}}^N \frac{z_j+z_k}{z_j-z_k} 
    &= \frac{z_j P''(z_j)-LP'(z_j)}{P'(z_j)}
    = \frac{L\nu\ict z_j^{N-2}-L\nu}{(L+1) z_j^{N-1} - L\nu\ict z_j^{N-2}+\nu} \,.
   % \\&= \frac{\pi K\ict z_j^{L-1}-\pi K}{(\pi+a) z_j^L + \pi K\ict z_j^{L-1}+aK}.
\end{align*}
\end{lemma}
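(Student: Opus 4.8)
The plan is to use the fact that, by Lemma~\ref{l:roots}, the monic polynomial $P(z)=P(z;\ict)$ of degree $N$ factors as $P(z)=\prod_{k=1}^N(z-z_k)$ with the $z_k$ pairwise distinct, so each $z_j$ is a simple root and in particular $P'(z_j)\ne0$. Taking the logarithmic derivative gives $P'(z)/P(z)=\sum_{k=1}^N(z-z_k)^{-1}$. To isolate the sum over $k\ne j$ I would expand near $z=z_j$: writing $P(z)=P'(z_j)(z-z_j)+\tfrac12 P''(z_j)(z-z_j)^2+\cdots$ and $P'(z)=P'(z_j)+P''(z_j)(z-z_j)+\cdots$, the ratio becomes $\tfrac1{z-z_j}+\tfrac{P''(z_j)}{2P'(z_j)}+O(z-z_j)$, whence
\[
\sum_{\substack{k=1\\k\ne j}}^N\frac1{z_j-z_k}=\lim_{z\to z_j}\Bigl(\frac{P'(z)}{P(z)}-\frac1{z-z_j}\Bigr)=\frac{P''(z_j)}{2P'(z_j)}\,.
\]

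Next I would use the elementary identity $\dfrac{z_j+z_k}{z_j-z_k}=\dfrac{2z_j}{z_j-z_k}-1$, sum it over $k\ne j$, and substitute the formula just obtained together with the fact that there are $N-1=L$ indices $k\ne j$. This yields
\[
\sum_{\substack{k=1\\k\ne j}}^N\frac{z_j+z_k}{z_j-z_k}=2z_j\cdot\frac{P''(z_j)}{2P'(z_j)}-L=\frac{z_jP''(z_j)-LP'(z_j)}{P'(z_j)}\,,
\]
which is the first claimed equality.

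For the second equality I would simply differentiate $P(z)=z^{N}-\nu\ict z^{N-1}+\nu z-\ict$ to get $P'(z)=(L+1)z^{N-1}-L\nu\ict z^{N-2}+\nu$ and $P''(z)=(L+1)Lz^{N-2}-L(L-1)\nu\ict z^{N-3}$, and then form $zP''(z)-LP'(z)$. The leading terms $(L+1)Lz^{N-1}$ and $-L(L+1)z^{N-1}$ cancel, the $z^{N-2}$ terms combine as $\bigl(-L(L-1)+L^2\bigr)\nu\ict z^{N-2}=L\nu\ict z^{N-2}$, and the constant term of $-LP'$ leaves $-L\nu$; evaluating at $z=z_j$ gives exactly the numerator $L\nu\ict z_j^{N-2}-L\nu$ over $P'(z_j)=(L+1)z_j^{N-1}-L\nu\ict z_j^{N-2}+\nu$.

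I do not expect a genuine obstacle here: this is the standard partial-fraction/logarithmic-derivative computation followed by a one-line polynomial division. The only point requiring care is that the local expansion at $z_j$ and the final rational expressions are legitimate, i.e.\ that $P'(z_j)\ne0$, and this is precisely the simplicity of the roots guaranteed by Lemma~\ref{l:roots}.
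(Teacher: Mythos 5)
Your proposal is correct and follows essentially the same route as the paper: the decomposition $\frac{z_j+z_k}{z_j-z_k}=\frac{2z_j}{z_j-z_k}-1$, the logarithmic derivative $P'/P=\sum_k(z-z_k)^{-1}$, and the Taylor-expansion (L'H\^opital) limit giving $\sum_{k\ne j}(z_j-z_k)^{-1}=P''(z_j)/(2P'(z_j))$. Your explicit verification of the second equality by differentiating $P$ is a detail the paper leaves implicit, and your computation of it is correct.
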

\begin{proof}
    Fix $j$ and note that
    \[
    \sum_{\substack{k=1\\k\ne j}}^N \frac{
    z_j+z_k}{z_j-z_k} 
    = 
    \sum_{\substack{k=1\\k\ne j}}^N \frac{
    z_k-z_j+2z_j}{z_j-z_k} 
    =
    -L + 2z_j 
    \sum_{\substack{k=1\\k\ne j}}^N \frac{1}{z_j-z_k} \,.
    \]
    Now, since $P(z)=\prod_{k=1}^N(z-z_k)$, for $P(z)\ne0$ we have
    \[
    \sum_{\substack{k=1\\k\ne j}}^N \frac{1}{z-z_k} 
    = \frac{P'(z)}{P(z)} - \frac1{z-z_j} = \frac{P'(z)(z-z_j)-P(z)}{P(z)(z-z_j)}.
    \]
    Then by Taylor expansion at $z_j$ (or L'H\^opital's rule), taking $z\to z_j$ yields
\[
    \sum_{\substack{k=1\\k\ne j}}^N \frac{1}{z_j-z_k}  =\frac{P''(z_j)}{2P'(z_j)}. 
\]
This proves the Lemma.
\end{proof}

\begin{proof}[Proof of Proposition~\ref{p:backlund}]
1. We will prove~\eqref{e:back_zj1} first. %the proof of \eqref{e:backCS2} is simpler.
To begin we note that $\mu$ and $\nu$ are related by the equations
\begin{equation}\label{e:munu}
\mu\nu = \frac{c+a+\pi}{c-a+\pi},
\qquad \frac\mu\nu = \frac{c+a-\pi}{c-a-\pi}.
\end{equation}
Next, differentiation of $P(z_j;\ict)=0$ yields, since $\dot\ict=2iac\ict$, 
\[ 
\dot z_jP'(z_j)=\dot\ict(\nu z_j^{N-1}+1) = 2iac(z_j^N +\nu z_j)  .
\]
Combining this with the last Lemma, in order to verify \eqref{e:back_zj1} it suffices to show
\begin{equation}\label{e:back3}
c \frac{z^{N-1}+\nu}{P'(z)} = aL\nu \frac{\ict z^{N-2}-1}{P'(z)} 
    - a\frac{z+\ict\mu}{z-\ict\mu} \,.
\end{equation}
We drop the subscript, writing $z=z_j$ here and for the rest of this step of the proof. 
Cross-multiplying, we find \eqref{e:back3} is equivalent to 
\begin{align*}
0 &= (cz^{N-1}+c\nu -aL\nu\ict z^{N-2} + aL\nu)(z-\ict\mu)
\\& \quad + a( (L+1)z^{N-1}-L\nu\ict z^{N-2} + \nu)(z+\ict\mu) \,.
\end{align*}
Because $aL=\pi$ we find this equivalent to 
\begin{align*}
0 &= (c+\pi+a)(z^N+\nu z)
-\ict z^{N-1}((c-\pi-a)\mu+2\pi\nu)
-\sigma\mu\nu(c+\pi-a).
\end{align*}
But due to the relations \eqref{e:munu} this is equivalent to
\[
0= (c+\pi+a)P(z;\ict) \,,
\]
which is true. This completes the proof of~\eqref{e:back_zj1}.

2. It remains to prove~\eqref{e:backCS2}. Note that by summing \eqref{e:backCS1}
we find %\fix{double sum?}
\[
\sum_{k=1}^Na\cot a(y_0-x_k) = \sum_{k=1}^N i\dot x_k - 
\sum_{k=1}^N\sum_{\substack{l=1\\l\ne k}}^N a \cot a(x_k-x_l). 
\]
The double sum vanishes since terms cancel in pairs upon switching $k$ and $l$.
Thus to prove \eqref{e:backCS2} it suffices to show
\begin{equation}\label{e:c_dotxk}
c = \sum_{k=1}^N \dot x_k \,.
\end{equation}
But since $P(z) = \prod_{k=1}^N (z-z_k)$ we get 
$P(0)=-\ict = (-1)^N \prod_{k=1}^N z_k$,
%(one of Vieta's formulas), 
so that 
\[
\ict = e^{2iact}
= (-1)^L \exp\left(2ia\sum_{k=1}^N x_k\right) . 
\]
Upon differentiating this, we infer \eqref{e:c_dotxk} is valid, and this proves \eqref{e:backCS2}.
\end{proof}

\subsection{Conclusion of the proof of Theorem~\ref{t:periodic}}

From the B\"acklund equations in Proposition~\ref{p:backlund}, it follows %\fix{it should, once we check}
that the Calogero-Sutherland equations \eqref{e:CSx} hold. 
This implies, due to the pole expansion identity 
\eqref{e:polesCS},
%\begin{equation}\label{e:polesCS}
%     2\pi^3 \frac{\cos\pi z}{\sin^3 \pi z} =
%    \frac{d^2}{dz^2} (\pi\cot\pi z) =  
%    \sum_{n\in\Z} \frac2{(z-n)^3} ,
%\end{equation}
that the infinite sequence $x_j(t)$, $j\in\Z$, which satisfies $x_{j+nN}=x_j+nL$ for all $j$ and $n$,
satisfies the Calogero-Moser system \eqref{e:cm1}, for \eqref{e:CSx} and \eqref{e:polesCS} imply  
\[
\ddot x_j = 
\frac 2{L^3} \sum_{\substack{k=1\\k\ne j}}^N \sum_{n\in\N} \left(\frac{x_j-x_k}L - n\right)^{-3}
= \sum_{\substack{k\in\Z\\k\ne j}} \frac{2}{(x_j-x_k)^3}.
\]
Note the terms in the last sum with $k=j+nN$ cancel in opposite-sign pairs.

\subsection{Relation to the projection method}

The solution of the general initial-value problem for the Calogero-Sutherland system
can be described by means of the so-called {\em projection method} of
Olshanetsky and Perelomov~\cite{Olshanetsky.Perelomov.81}.
We have not made any use of the projection method in deriving or verifying 
the formulas for periodic waves in Theorem~\ref{t:periodic}. 
But there appears to be a relation to it which we can only partially explain, 
going through the polynomial root properties from Lemma~\ref{l:roots}. 

Operationally, the projection method determines solutions as follows:
The time-dependent quantities $z_j= e^{2iax_j}$ are the eigenvalues of a matrix
expressed as 
\begin{equation}
    X(t) = B e^{2ia At} B \,,
\end{equation}
where the matrices $A$ and $B$ are explicitly given by initial data,
with entries
\begin{align}
  A_{jk} &= \delta_{jk}\, \dot x_j(0) + 
  (1-\delta_{jk})
  \frac{ia}{\sin a(x_j(0)-x_k(0))} 
  \,,
  \\  B_{jk} &=  \delta_{jk}\, e^{iax_j(0)} \,.
\end{align}
(See Appendix~\ref{a:projection_method} for an explanation, and a modified solution procedure.)
%(See the Supplementary Material for an explanation and a modified procedure.)

For initial data that correspond to a periodic wave given by Theorem~\ref{t:periodic},
by Lemma~\ref{l:roots} it follows that  the characteristic polynomial of $X(t)$
must be identical to the polynomial $P(z)=P(z;\ict)$, i.e.,
\begin{equation}
    \det(zI-X(t)) = P(z)   \,.
\end{equation}
Why the characteristic polynomial should have such a simple expression in this case
may be an interesting issue for further investigation.

%-----
\section{Calogero-Moser solitary waves}\label{s:soliton}

\subsection{Proof of Theorem~\ref{t:soliton}}
We now turn to the proof of Theorem~\ref{t:soliton}. Fix $c>\pi$. 
The aim is to show that 
if $\vp(s)$ is determined by \eqref{e:yvp} and $x_j(t)$ by \eqref{e:xvp},
then the Calogero-Moser equations \eqref{e:cm1} hold. It suffices to do this for $j=0$ only,
due to the fact that the shift symmetry \eqref{e:wavesymmetry} with $h=1$, $\tau=1/c$ implies
for all $j,k$ and all $t$, 
\[
x_k(t)=x_{k+j}(t+j\tau)-j.
\]

We introduce the notation 
\[
x^N_j(t)= j-\vp_N(j-ct)
\]
to denote the $N$-periodic wave solutions
of the Calogero-Moser system as described by Theorem~\ref{t:periodic}, 
where $\vp_N$ is determined by \eqref{e:pwave}.
In order to prove Theorem~\ref{t:soliton}, it suffices to prove the following three limit identities,
for every $t\in\R$:
\begin{align} \label{l1_sol}
    x_j(t) &= \lim_{N\to\infty} x^N_j(t) \,, \quad\text{for all $j\in\Z$,}
  \\ \label{l2_sol}
    \ddot x_0(t) &= \lim_{N\to\infty} \ddot x_0^N(t)  \,,
  \\ \label{l3_sol}
    \sum_{k\ne 0} \frac{2}{(x_0-x_k)^3}
    &= \lim_{N\to\infty} \sum_{k\ne 0} \frac{2}{(x^N_0-x^N_k)^3} \,.
\end{align}
To proceed, we first study the coefficients $\nu$ and $\kappa$ determined 
from $N$ by \eqref{e:pwave_eta}:
\begin{lemma}\label{lem:kappaa_lim} 
As $N\to\infty$, we have $\nu\to1$ and $\kappa a\pi \to c^2-\pi^2$.
\end{lemma}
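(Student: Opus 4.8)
The plan is to extract the $N\to\infty$ asymptotics directly from the defining formulas \eqref{e:pwave_eta}, recalling that throughout this section $a=\pi/L$ with $L=N-1$, so $a\to0$ like $\pi/N$ as $N\to\infty$, while $c>\pi$ is held fixed. First I would observe that since $c^2>\pi^2$, for all large $N$ we have $c>\pi+a$, so $\nu$ is well-defined by \eqref{e:pwave_eta} and the formula makes sense. Then I would compute
\[
\nu^2 = \frac{c^2-(\pi+a)^2}{c^2-(\pi-a)^2}
= \frac{(c^2-\pi^2) - 2\pi a - a^2}{(c^2-\pi^2) + 2\pi a - a^2}\,,
\]
and since the numerator and denominator both tend to $c^2-\pi^2>0$ as $a\to0$, we get $\nu^2\to1$; as $\nu>0$ this gives $\nu\to1$, which is the first claim.

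For the second claim, the key is that $\kappa = (1+\nu)/(1-\nu)\to\infty$, so $\kappa a$ is an indeterminate product and one must track the rate at which $1-\nu\to0$. I would write $1-\nu^2 = (1-\nu)(1+\nu)$, so that
\[
\kappa a = \frac{(1+\nu)a}{1-\nu} = \frac{(1+\nu)^2 a}{1-\nu^2}\,.
\]
From the displayed formula for $\nu^2$,
\[
1-\nu^2 = \frac{\bigl((c^2-\pi^2)+2\pi a - a^2\bigr) - \bigl((c^2-\pi^2)-2\pi a - a^2\bigr)}{(c^2-\pi^2)+2\pi a-a^2}
= \frac{4\pi a}{(c^2-\pi^2)+2\pi a - a^2}\,,
\]
so that
\[
\kappa a = \frac{(1+\nu)^2 a \bigl((c^2-\pi^2)+2\pi a - a^2\bigr)}{4\pi a}
= \frac{(1+\nu)^2 \bigl((c^2-\pi^2)+2\pi a - a^2\bigr)}{4\pi}\,.
\]
The factor of $a$ cancels exactly, and now letting $N\to\infty$ (so $a\to0$ and $\nu\to1$) gives $\kappa a \to \frac{4(c^2-\pi^2)}{4\pi} = \frac{c^2-\pi^2}{\pi}$, i.e.\ $\kappa a\pi\to c^2-\pi^2$, as claimed.

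There is no real obstacle here: the only subtlety is recognizing that the $a$ in $\kappa a$ cancels against the $a$ appearing in $1-\nu^2$, which is why one should rationalize by multiplying through by $1+\nu$ rather than trying to expand $\nu$ itself to first order. Everything else is elementary algebra and continuity of the resulting rational expression at $a=0$.
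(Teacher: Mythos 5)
Your proposal is correct and follows essentially the same route as the paper: the key step in both is the identity $\kappa = \dfrac{(1+\nu)^2}{1-\nu^2} = (1+\nu)^2\,\dfrac{c^2-(\pi-a)^2}{4\pi a}$, after which the factor of $a$ cancels and the limit $\kappa a\pi \to c^2-\pi^2$ follows by continuity as $a\to 0$ and $\nu\to 1$. The paper merely states this identity and leaves the cancellation implicit, whereas you spell out the intermediate algebra; there is no substantive difference.
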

\begin{proof}
    Recalling $a=\frac\pi L$ with $L=N-1$, this last follows from the relation
    \[
    \kappa = \frac{(1+\nu)^2}{1-\nu^2} = (1+\nu)^2 \left(\frac{c^2-(\pi-a)^2}{4\pi a}\right) \,.
    \qedhere %SIAP: comment out 
    \]
\end{proof}
Evidently, both \eqref{l1_sol} and \eqref{l2_sol} follow immediately from pointwise convergence of 
the derivatives $\vp^{(n)}_N$ of $\vp_N$ to those of $\vp$:
\begin{lemma}\label{lem:limvp}
   $\vp^{(n)}(s)=\lim_{N\to\infty}\vp_N^{(n)}(s)$ for $n=0,1,2$ and all $s\in\R$.
\end{lemma}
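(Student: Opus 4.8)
The plan is to pass to the limit directly in the defining relations. Recall from the proof of the profile lemma that $\vp_N$ is built by composing $\hat s_N^{-1}$ with the construction $\vp \mapsto y = \hat y_N(\vp)$, $s = \vp + L\, \hat y_N(\vp)$, where $\hat y_N$ is defined by \eqref{e:yintvp} with $\kappa = \kappa_N$; and the soliton profile $\vp$ is determined by \eqref{e:yvp}. First I would rewrite \eqref{e:pwave} in the suggestive form $\kappa_N a\,(s-\vp_N) = \kappa_N a^{-1}\tan(a(s-\vp_N))\cdot a/\kappa_N \cdot \ldots$ — more precisely, since $a(s-\vp_N)\to 0$ for $s,\vp_N$ in a bounded range, use $\tan(a(s-\vp_N)) = a(s-\vp_N) + O(a^3)$ to get from \eqref{e:pwave} that
\[
\kappa_N a\,(s-\vp_N)\,(1+o(1)) = \tan\pi\vp_N \,.
\]
By Lemma~\ref{lem:kappaa_lim}, $\kappa_N a\pi \to c^2-\pi^2$, so in the limit this formally becomes $(c^2-\pi^2)(s-\vp)/\pi = \tan\pi\vp/\pi$, i.e.\ exactly \eqref{e:yvp}. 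The work is to make this rigorous and extend it to two derivatives.

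The key steps, in order. (1) \textbf{Pointwise convergence of $\vp_N$.} Fix $s$. Define $G_N(\vp) := \kappa_N \tan(a(s-\vp)) - \tan\pi\vp$ and $G_\infty(\vp) := (c^2-\pi^2)(s-\vp)/\pi - (1/\pi)\tan\pi\vp$ (after dividing the limiting relation appropriately). One checks $\vp \mapsto G_N(\vp)$ is strictly decreasing on the relevant branch (both terms we subtract are increasing in $\vp$, matching the monotonicity already established for $\hat s_N$), so $\vp_N(s)$ is its unique zero in $(-\tfrac12,\tfrac12)$; likewise $\vp(s)$ is the unique zero of $G_\infty$. A compactness/monotonicity argument — the $\vp_N(s)$ lie in the compact interval $[-\tfrac12,\tfrac12]$, and any subsequential limit $\vp_*$ satisfies $G_\infty(\vp_*)=0$ because $\kappa_N a \to (c^2-\pi^2)/\pi$ and $a\to 0$ make $\kappa_N\tan(a(s-\vp_N)) \to (c^2-\pi^2)(s-\vp_*)/\pi$ uniformly for $\vp_N$ in the bounded range — forces $\vp_* = \vp(s)$, giving $\vp_N(s)\to\vp(s)$. (2) \textbf{Convergence of $\vp_N'$ and $\vp_N''$.} Differentiate \eqref{e:pwave} implicitly: $\kappa_N a(1-\vp_N')\sec^2(a(s-\vp_N)) = \pi\vp_N'\sec^2\pi\vp_N$, so
\[
\vp_N' = \frac{\kappa_N a \sec^2(a(s-\vp_N))}{\kappa_N a \sec^2(a(s-\vp_N)) + \pi\sec^2\pi\vp_N}\,.
\]
Since $a(s-\vp_N)\to 0$ so $\sec^2(a(s-\vp_N))\to 1$, and $\kappa_N a\to (c^2-\pi^2)/\pi$, and $\vp_N\to\vp(s)$, the right side converges to $\frac{(c^2-\pi^2)/\pi}{(c^2-\pi^2)/\pi + \pi\sec^2\pi\vp} = \vp'(s)$, which is precisely what implicit differentiation of \eqref{e:yvp} gives. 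Differentiating once more expresses $\vp_N''$ as a rational function of $\vp_N$, $\vp_N'$, $\kappa_N a$, $a$, and $\sec^2(a(s-\vp_N))$, $\tan(a(s-\vp_N))$; all of these converge by steps (1) and (2) and Lemma~\ref{lem:kappaa_lim}, and the denominator stays bounded away from $0$, so $\vp_N''(s)\to\vp''(s)$.

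The main obstacle is bookkeeping the uniformity in the expansion $\tan(a(s-\vp_N)) = a(s-\vp_N) + O((a(s-\vp_N))^3)$: one must know a priori that $s-\vp_N(s)$ does not blow up as $N\to\infty$ before concluding $a(s-\vp_N)\to 0$. This is handled by noting from \eqref{e:pwave} that $|\kappa_N a(1-o(1))(s-\vp_N)| = |\tan\pi\vp_N|/|{\cdots}|$ — actually more simply, $\vp_N\in(-\tfrac12,\tfrac12)$ is bounded, so from the monotone relation $s = \vp_N + L\hat y_N(\vp_N)$ and $|\hat y_N(\vp_N)| < \tfrac12$ we get $|s - \vp_N| = L|\hat y_N(\vp_N)| \le L/2$; combined with $a = \pi/L$ this gives $|a(s-\vp_N)| \le \pi/2$, not yet $\to 0$. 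To get $a(s-\vp_N)\to 0$ one instead fixes $s$ and uses that $\hat y_N(\vp_N) = (s-\vp_N)/L \to 0$ would follow once we know $\vp_N(s)$ stays bounded away from $\pm\tfrac12$ — which it does, since $\tan\pi\vp_N = \kappa_N\tan(a(s-\vp_N))$ and the right side is $O(\kappa_N a \cdot (s-\vp_N)/\ldots)$, bounded by Lemma~\ref{lem:kappaa_lim}; hence $\vp_N(s)$ lies in a fixed compact subinterval of $(-\tfrac12,\tfrac12)$, so $s - \vp_N(s)$ is bounded, so $a(s-\vp_N(s)) = \pi(s-\vp_N)/L \to 0$. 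Once this is pinned down the rest is routine.
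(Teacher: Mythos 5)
Your proposal is correct, and it reaches the derivative formulas the same way the paper does, but it establishes the crucial pointwise convergence $\vp_N(s)\to\vp(s)$ by a genuinely different route. The paper derives the (non-autonomous) ODEs
$\vp'=\frac{(c^2-\pi^2)\cos^2\pi\vp}{\pi^2+(c^2-\pi^2)\cos^2\pi\vp}$ and
$\vp_N'=\frac{\kappa a\pi\cos^2\pi\vp_N}{\pi^2\cos^2 a(s-\vp_N)+\kappa a\pi\cos^2\pi\vp_N}$
(your $\sec^2$ expression is the same after clearing denominators), and then gets $\vp_N\to\vp$ from the common initial condition $\vp_N(0)=0=\vp(0)$ together with continuous dependence of ODE solutions on the right-hand side, invoking Lemma~\ref{lem:kappaa_lim}. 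You instead work directly with the algebraic defining relations: $\vp_N(s)$ is the unique zero in $(-\tfrac12,\tfrac12)$ of the strictly decreasing function $G_N(\vp)=\kappa\tan(a(s-\vp))-\tan\pi\vp$, and a compactness/subsequence argument plus $\kappa a\pi\to c^2-\pi^2$ identifies every subsequential limit with the unique zero of the limiting relation \eqref{e:yvp}. Your route is more elementary and self-contained (no Gronwall/continuous-dependence machinery), at the cost of the branch and boundedness bookkeeping you describe; the paper's route gets local uniformity in $s$ essentially for free. One remark on your final paragraph: the worry is resolved more simply than you make it. For fixed $s$ and $N>2|s|$ you already have $\vp_N(s)\in(-\tfrac12,\tfrac12)$, hence $|s-\vp_N(s)|\le|s|+\tfrac12$ is bounded independently of $N$, hence $a(s-\vp_N)=\pi(s-\vp_N)/L\to0$ directly; you do not need $\vp_N(s)$ bounded away from $\pm\tfrac12$ for this step (that stronger fact is only needed so that $\tan\pi\vp_N$ converges, and it follows afterwards by contradiction, since $\kappa_N\tan(a(s-\vp_N))=\kappa_N a(s-\vp_N)(1+o(1))$ stays bounded while $\tan\pi\vp_N$ would blow up). With that cleanup the argument is complete and correct for $n=0,1,2$.
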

\begin{proof} By differentiating the relations \eqref{e:yvp} and \eqref{e:pwave} that 
respectively determine $\vp$ and $\vp_N$, after a bit of calculation we find
\begin{align}
   \vp'(s) &= \frac{(c^2-\pi^2)\cos^2\pi\vp}{\pi^2+ (c^2-\pi^2)\cos^2\pi\vp},
   \\
   \vp'_N(s) &= \frac{\kappa a\pi \cos^2\pi\vp_N}{\pi^2\cos^2 a(s-\vp_N)+\kappa a\pi \cos^2\pi\vp_N}.
\end{align}
Since $\vp(0)=0=\vp_N(0)$, the pointwise convergence $\vp_N(s)\to\vp(s)$ as $N\to\infty$
(uniformly on compact sets, in fact) follows from Lemma~\ref{lem:kappaa_lim} 
by continuous dependence for initial-value problems for ODEs. 
Then $\vp'_N(s)\to\vp'(s)$ follows from the ODEs,
and $\vp''_N(s)\to\vp''(s)$ follows by differentiating the ODEs.
\end{proof}

To justify the last limit formula \eqref{l3_sol}, observe that for all $k\ne0$,
\[
|x_0(t)-x_k(t)| = |k| \left|1 -\frac{\vp(s+k)-\vp(s)}k\right|.
\]
Then from the lemma below, we obtain the bounds
\[
|x^N_0-x^N_k|\ge \delta |k|, \qquad \frac{2}{|x^N_0-x^N_k|^3}\le \frac{2}{|\delta k|^3},
\]
for $N$ sufficiently large, 
whence the limit \eqref{l3_sol} follows by dominated convergence.

\begin{lemma}
   There exists $N_0$ and $\delta>0$ such that 
   \[
   \vp_N'(s)\le 1-\delta  \quad\text{for all $s\in\R$ and $N\ge N_0$.}
   \]
\end{lemma}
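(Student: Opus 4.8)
There exists $N_0$ and $\delta>0$ such that $\vp_N'(s)\le 1-\delta$ for all $s\in\R$ and all $N\ge N_0$.

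**Plan.** The starting point is the explicit formula for $\vp_N'$ derived just above, namely
\[
\vp_N'(s) = \frac{\kappa a\pi \cos^2\pi\vp_N}{\pi^2\cos^2 a(s-\vp_N)+\kappa a\pi \cos^2\pi\vp_N}.
\]
I want a bound of the form $\vp_N'(s)\le 1-\delta$ uniformly in $s$ and in $N$ large. Rearranging, $\vp_N'\le 1-\delta$ is equivalent to
\[
\delta\bigl(\pi^2\cos^2 a(s-\vp_N)+\kappa a\pi\cos^2\pi\vp_N\bigr)\le \pi^2\cos^2 a(s-\vp_N),
\]
i.e. $\delta\,\kappa a\pi\cos^2\pi\vp_N\le (1-\delta)\pi^2\cos^2 a(s-\vp_N)$. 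The difficulty is that $\cos^2 a(s-\vp_N)$ can vanish (when $a(s-\vp_N)$ is an odd multiple of $\pi/2$), so a naive pointwise bound fails; I need to use the constraint relation \eqref{e:pwave}, $\kappa\tan a(s-\vp_N)=\tan\pi\vp_N$, to see that when $\cos^2 a(s-\vp_N)$ is small, $\cos^2\pi\vp_N$ is also small, and quantitatively so.

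The key algebraic step is to eliminate the ``position'' variable in favor of a single angular quantity. Write $\theta = a(s-\vp_N)$ and $\psi = \pi\vp_N$, so the constraint is $\kappa\tan\theta=\tan\psi$. A short computation gives
\[
\cos^2\psi = \frac{\cos^2\theta}{\cos^2\theta+\kappa^2\sin^2\theta},\qquad
\kappa a\pi\cos^2\psi = \frac{\kappa a\pi\cos^2\theta}{\cos^2\theta+\kappa^2\sin^2\theta}.
\]
Substituting into the denominator of $\vp_N'$ and simplifying, one finds
\[
\vp_N'(s)=\frac{\kappa a\pi}{\pi^2(\cos^2\theta+\kappa^2\sin^2\theta)+\kappa a\pi}
=\frac{\kappa a\pi}{\pi^2+\pi^2(\kappa^2-1)\sin^2\theta+\kappa a\pi}.
\]
Now the numerator and denominator no longer contain $s$ except through $\sin^2\theta\in[0,1]$, and since $\kappa>1$ the term $\pi^2(\kappa^2-1)\sin^2\theta\ge0$, so the right-hand side is maximized at $\sin^2\theta=0$, giving the uniform-in-$s$ bound
\[
\vp_N'(s)\le \frac{\kappa a\pi}{\pi^2+\kappa a\pi}.
\]
It remains to show this last quantity is bounded away from $1$ uniformly for $N$ large. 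But $\dfrac{\kappa a\pi}{\pi^2+\kappa a\pi}=1-\dfrac{\pi^2}{\pi^2+\kappa a\pi}$, and by Lemma~\ref{lem:kappaa_lim} we have $\kappa a\pi\to c^2-\pi^2$ as $N\to\infty$; hence for $N$ large, $\kappa a\pi\le c^2-\pi^2+1$, so
\[
\vp_N'(s)\le 1-\frac{\pi^2}{\pi^2+(c^2-\pi^2+1)} = 1-\frac{\pi^2}{c^2+1}=:1-\delta,
\]
which is the desired bound with $\delta=\pi^2/(c^2+1)>0$, valid for all $s\in\R$ once $N\ge N_0$ for a suitable $N_0$.

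**Main obstacle.** The only real subtlety is the first step: recognizing that the apparently $s$-dependent expression for $\vp_N'$ collapses, after using the constraint \eqref{e:pwave}, to a function of $\sin^2 a(s-\vp_N)$ alone whose maximum over $s$ is explicit. Once that reduction is in hand, the uniformity in $N$ is an immediate consequence of the convergence $\kappa a\pi\to c^2-\pi^2$ already established in Lemma~\ref{lem:kappaa_lim}. (One should double-check the trigonometric identity $\cos^2\psi=\cos^2\theta/(\cos^2\theta+\kappa^2\sin^2\theta)$ covers the case $\cos\theta=0$ correctly, where it forces $\cos\psi=0$ as well, consistent with $\kappa\tan\theta=\tan\psi$ in the limit; at such points $\vp_N'=\kappa a\pi/(\pi^2\kappa^2+\kappa a\pi)<1-\delta$, so the bound holds there too.)
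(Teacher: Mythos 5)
Your proof is correct and takes essentially the same route as the paper's: the paper differentiates the inverse relation $s=\vp_N+L\hat y(\vp_N)$ to obtain $\vp_N'=\bigl(1+L\kappa/(\kappa^2\cos^2\pi\vp_N+\sin^2\pi\vp_N)\bigr)^{-1}$ and bounds the trigonometric denominator by $\kappa^2+1$, which is exactly your identity $\vp_N'=\kappa a\pi/\bigl(\pi^2(\cos^2\theta+\kappa^2\sin^2\theta)+\kappa a\pi\bigr)$ written in the variable $\pi\vp_N$ instead of $\theta=a(s-\vp_N)$. Both arguments then conclude from Lemma~\ref{lem:kappaa_lim}; your pointwise bound (taking $\sin^2\theta=0$) is marginally sharper than the paper's $\kappa^2+1$ estimate but yields the same limiting constant $1-\pi^2/c^2$.
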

\begin{proof}
Using that $s=\vp_N + L y$ with $y$ given by \eqref{e:yintvp}, 
differentiating we find that
\[
1 = \vp_N'\left( 1 + \frac{L\kappa}{\kappa^2\cos^2\pi\vp_N+\sin^2\pi\vp_N}\right) 
\ge \vp_N'\left(1+ \frac{L\kappa}{\kappa^2+1}\right)
\]
for all $s$. But by Lemma~\ref{lem:kappaa_lim}, as $N\to\infty$ we have $\kappa\to\infty$
and
\[
1+ \frac{L\kappa}{\kappa^2+1}
= 1+ \frac{\pi^2}{\kappa a\pi} \frac{\kappa^2}{\kappa^2+1}\to \frac{c^2}{c^2-\pi^2}.
\]
Hence for $N_0$ large enough, the claimed result follows with any $\delta<\pi^2/c^2$.
\end{proof}

This finishes the proof of Theorem~\ref{t:soliton}.

%----------------

\subsection{Distinguished limits}

\begin{figure}
    \centering
    \includegraphics[width=0.9\textwidth]{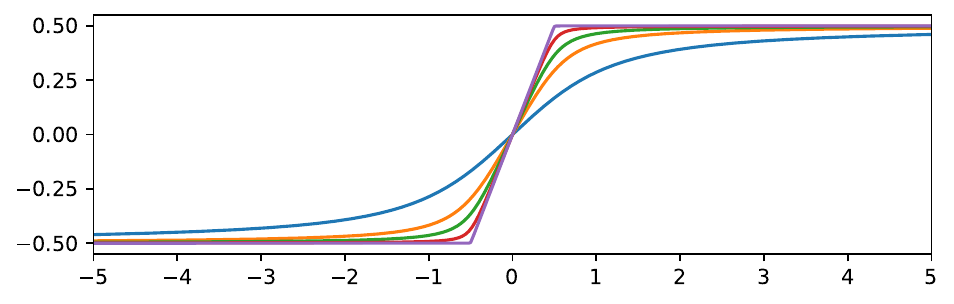}
%SIAP-begin-------
%    \put(-335,85){$\vp$}
%    \\ \includegraphics[width=0.9\textwidth]{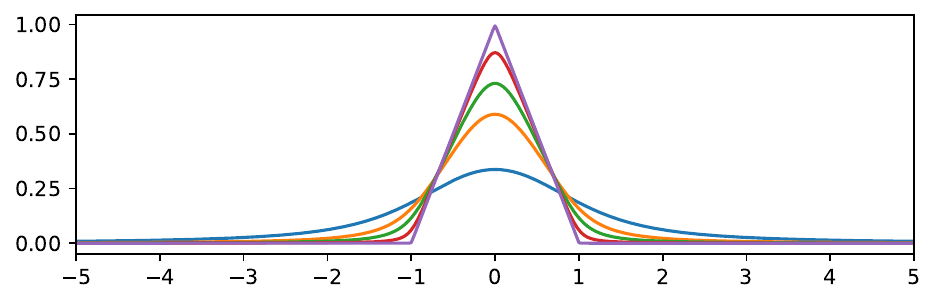}
%    \put(-345,85){$-r$}
%SIAP-end-------
%arXiv-begin-------
    \put(-315,85){$\vp$}
    \\ \includegraphics[width=0.9\textwidth]{Figs/profiles_r.pdf}
    \put(-325,85){$-r$}
%arXiv-end-------
    \put(0,20){$s$}
    \caption{Profiles for soliton displacement (top) and relative displacement (bottom)
    for $c/\pi=1.25,1.75, 2.5,5,100$} % \fix{add axis labels}}
    \label{fig:profile}
\end{figure}

In Fig.~\ref{fig:profile}, for several values of $c/\pi$, 
we plot profiles for soliton displacement $\vp(s)$ and relative displacement $-r(s) = \vp(s+1)-\vp(s)$.
As the figures suggest, the soliton formula \eqref{e:yvp} simplifies as
$c\to\infty$ and $c\to\pi$ in interesting ways. 

In the limit $c\to\infty$, evidently the profile $\vp\to\vp_\infty$,
 where $\vp_\infty$ is odd with
\begin{equation}
    \vp_\infty(s) = \min\left(s,\tfrac12\right) \quad\text{for $s\ge0$.}
\end{equation}
Thus high-speed waves converge to a {\em hard-collision limit}, in which
one particle at a time moves at a constant speed, coming to a stop when
it collides with the next particle in front.

In the (sonic) limit $c\to\pi$, if we scale by writing $c^2 = \pi^2+\eps \pi$
then we find that \eqref{e:yvp} reduces to 
\begin{equation}
\eps s = \tan\pi\vp + O(\eps).
\end{equation}
We find this consistent with the formal long-wave limit obtained in 
Theorem~\ref{t:longwave}. This limit is a Benjamin-Ono equation, which 
for $w=\pi u=-\pi\D_x v(x,\tau)$ takes the form
\begin{equation}\label{e:BO_w}
    2\D_\tau w + 4 w\D_x w - H\D_x^2 w =0  \,,
\end{equation}
since for $\alpha=2$ the coefficients 
$\kappa_1=2\pi$, $\kappa_2=4\pi^2$ and $\kappa_3=\pi$.
Equation \eqref{e:BO_w} has a solitary wave solution 
    $w(x,\tau)=W(x-\frac12\tau)$ with
\begin{equation}
  W(z) =  \re\left(\frac{i}{z+i}\right) = \frac1{z^2+1} , \quad
  HW(z) =  \im\left(\frac{i}{z+i}\right) = \frac{z}{z^2+1}  \,,
\end{equation}
which satisfy $\D_z(HW)=2W^2-W$.
Since $c=\sqrt{\pi^2+\eps\pi}\sim \pi + \frac12\eps$, 
the correspondence $z=x-\frac12\tau = \eps(j-\pi t)-\frac12\eps^2t$
is consistent with $z\sim \eps s = \eps(j-ct)$ and 
\[
W \sim \frac{\pi}\eps \frac{d\vp}{ds} \sim \frac1{(\eps s)^2+1}.
\]

\subsection{Numerical comparison with nearest-neighbor models}

\begin{figure}
    \centering
    \includegraphics[width=0.9\textwidth]{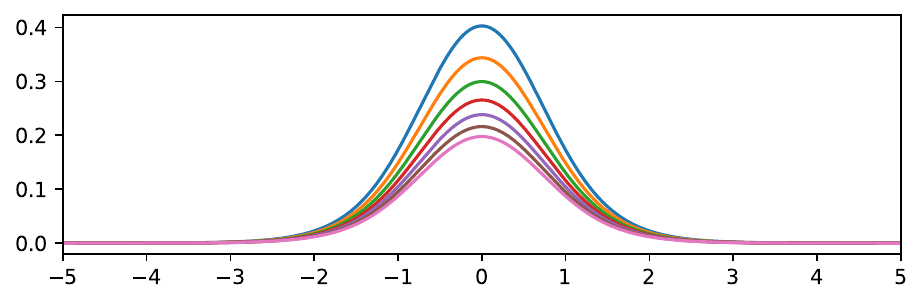}
%SIAP-begin-------
%    \put(-345,85){$-r$}
%    \\ \includegraphics[width=0.9\textwidth]{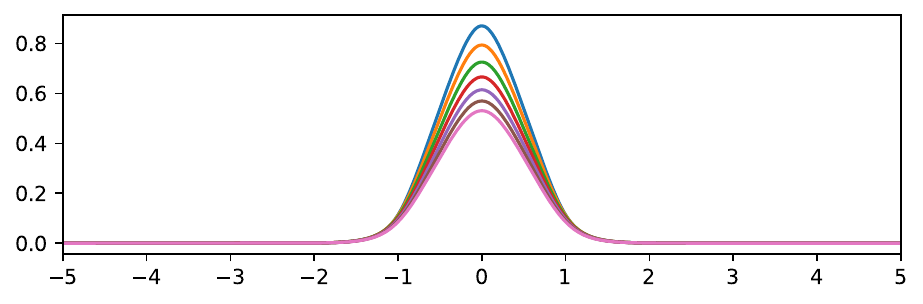}
%    \put(-345,85){$-r$}
%SIAP-end-------
%arXiv-begin-------
    \put(-320,85){$-r$}
    \\ \includegraphics[width=0.9\textwidth]{Figs/vary_alpha_cfac_2pt5.pdf}
    \put(-325,85){$-r$}
%arXiv-end-------
    \put(0,20){$s$}
    \caption{Relative displacement for solitary waves in 
    nearest-neighbor lattices, varying $\pow$ from $0.5$ to $3.5$ 
    (top to bottom in each subplot). $c/c_s=1.25$ (top), $2.5$ (bottom)}
%    \caption{Wave profiles for $c/\pi=1.1,1.5, 2,\sqrt{20},100$} % \fix{add axis labels}}
    \label{fig:vary_alpha}
\end{figure}
\begin{figure}
    \centering
    \includegraphics[width=0.9\textwidth]{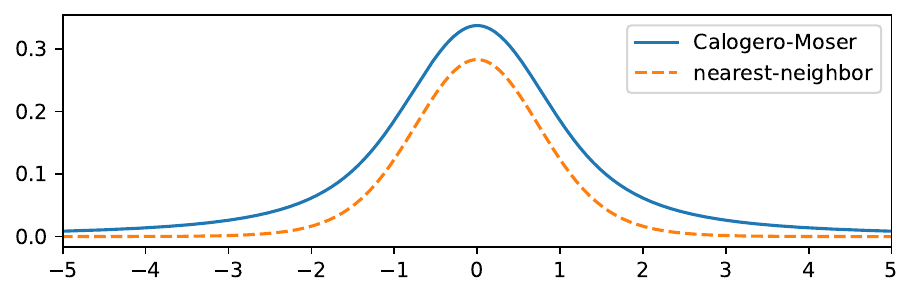}
%SIAP-begin-------
%    \put(-345,85){$-r$}
%    \\ \includegraphics[width=0.9\textwidth]{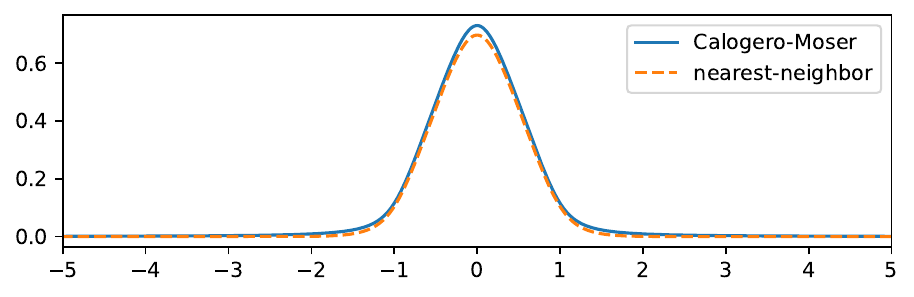}
%    \put(-345,85){$-r$}
%SIAP-end-------
%arXiv-begin-------
    \put(-320,85){$-r$}
    \\ \includegraphics[width=0.9\textwidth]{Figs/a_1pt73_vs_CM_cfac_2pt5.pdf}
    \put(-325,85){$-r$}
%arXiv-end-------
    \put(0,20){$s$}
    \caption{Relative displacement for solitary waves, comparing 
    infinite-lattice Calogero-Moser 
    to nearest-neighbor lattice with $\beta=\pow+1=2.73$  
    for fixed speed ratio $c/c_s=1.25$ (top), $2.5$ (bottom)}
%    \caption{Wave profiles for $c/\pi=1.1,1.5, 2,\sqrt{20},100$} % \fix{add axis labels}}
    \label{fig:compare}
\end{figure}

Let us now compare relative displacement profiles for
solitary waves in the infinite Calogero-Moser lattice
with numerical computations for the power-law nearest-neighbor lattice.
Particle positions in the latter are governed by the system
\begin{equation}\label{e:meq1sys}
    \ddot x_j =  -\pow \left( (x_{j+1}-x_j)^{-\pow-1} - (x_j-x_{j-1})^{-\pow-1} \right),
\end{equation}
keeping only the term with $m=1$ on the right-hand side of system~\eqref{e:sys1}.
For solitary waves $x_j(t)=j-\vp(j-ct)$, one can show as in \cite{English.Pego.05} 
that the (negative) relative displacement profile $r(s)=\vp(s+1)-\vp(s)$ satisfies 
\begin{equation}\label{e:meq1r''}
c^2 r''(s) = F(r(s+1))-2F(r(s))+F(r(s-1)) ,  
\end{equation}
with $F(r) = \pow(1-r)^{-\pow-1}$, and infer that 
\begin{equation}
    r(s) =  (\Lambda * F\circ r)(s)= \int_{-\infty}^\infty \Lambda(s-\tau) F(r(\tau))\,d\tau, 
\end{equation}
where $\Lambda(s) = c^{-2}\max(1-|s|,0)$
That is, the function $r(\cdot)$ should be a fixed point of the nonlinear operator 
of composing with the force function $F$ followed by convolution with the `tent' function $\Lambda$.

We numerically compute profiles by straightforward spatial discretization of the following variant
of Petviashvili's iteration method for such equations~\cite{petviashvili1976equation,petviashvili2016solitary}. 
Starting with $r_0= 0.01 c^2\Lambda$, for $n=1,2,\ldots,N$ compute
\begin{align}
    \tilde r_n &= \Lambda* F\circ r_{n-1}\,, 
    \\ C_n &= \left.\int_\R r_{n-1}\middle/ \int_R \tilde r_n \right.\,, 
    \\ r_n &= C_n^q \,\tilde r_n\,. 
\end{align}
We take the exponent $q$ slightly greater than 1 to overcorrect amplitude error that otherwise 
grows with this type of iteration.
The integrals are approximated by uniform-grid discretization on the finite interval $[-20,20]$
with step size $h=0.01$.
With $N=1000$ and varying $q$ as needed, we obtain numerical convergence in all cases treated,
finding residual errors in \eqref{e:meq1r''} smaller than $10^{-12}$, and $|C_N-1|<10^{-14}$.

Nearest-neighbor profiles for a range of values of $\pow$ are plotted in Figure~\ref{fig:vary_alpha}.
In each subplot we keep the ratio of wave speed $c$ to sonic speed $c_s$ fixed, either 1.25 or 2.5. 
In Figure~\ref{fig:compare} we plot results comparing  Calogero-Moser profiles 
with profiles for the case $\beta=\pow+1=2.73$ that was used as a model for experiments in \cite{MoleronEA2014}.
(The sonic speed $c_s=\sqrt{\pow(\pow+1)}\approx 2.17322$ for \eqref{e:meq1sys} and $c_s=\pi$ for Calogero-Moser.)
For larger values of $c/c_s$ the graphs become indistinguishable, approaching the hard-collision limit in each case.
For smaller values of $c/c_s$ the Calogero-Moser profile broadens to approach a Benjamin-Ono soliton shape with
algebraic decay, while the nearest-neighbor profile approaches a scaled KdV $\sech^2$ shape,
according to results proved in \cite{Friesecke.Pego.99}.

%\vfill   % comment out later \pagebreak
\section*{Acknowledgements}
  This material is based upon work supported by the National Science Foundation under
  Grant No.~DMS 2106534.
%SIAP: comment out previous 3 lines

\bibliographystyle{siam}
\bibliography{CM2}
%-------
% % comment out these lines later
% \bigskip \hrule \bigskip
% \fix{Delete eventually:} All references in the bib file appear since  
% \begin{verbatim} \nocite{*} \end{verbatim} is enabled.
% \nocite{*}  % This line prints all bibtex entries, cited or not
% \bigskip \hrule \bigskip

%-------
\appendix

\section{Log correction to scaling in an edge case}\label{a:logkdv}

Here we prove the claim in Remark~\ref{r:log}, to the effect that in the edge case $\pow=3$
of Theorem~\ref{t:longwave} we obtain the KdV equation after a modified scaling ansatz.

Fix $\pow=3$, $p=1$, $q=3$. Let us replace the scaling ansatz in \eqref{e:scalepq} by 
\begin{equation}
    x_j = j +  \lambda\eps v(x,\tau), \qquad x = \eps(j-c_\pow t), \quad \tau =  \nu \eps^q t,
\end{equation}
where $\lambda$ and $\nu$ depend on $\eps$ in a way to be specified.

\begin{proposition}
   Under the hypotheses of Theorem~\ref{t:longwave}, if $\pow=3$ and provided
   $\lambda=\nu=\log(1/\eps)$, then
   the lattice error \eqref{e:latticeR} satisfies $R_\eps=\lambda^2\eps^5(Q+o(1))$, where 
  \[
  Q = \kappa_1 \D_\tau u + \kappa_2 u\D_x u + \D_x^3 u,
  \] 
  with $\kappa_1=2c_\pow$ and $\kappa_2 = \pow(\pow+1)(\pow+2)\zeta_\pow$ as before. 
\end{proposition}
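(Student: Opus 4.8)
The plan is to adapt the computation in the proof of Theorem~\ref{t:longwave}, part~(ii), to the borderline exponent $\pow=3$, where the series $\sum m^{-(\pow-2)}=\sum m^{-1}$ that produced the KdV dispersion term diverges only logarithmically. First I would rerun the Taylor expansion of the lattice error exactly as before: writing $x_{j+m}-x_j=m(1-\lambda\eps^{p+1}A_{\eps m}u)$ with $p=1$ and the extra amplitude factor $\lambda$, the decomposition $R_\eps=\ddot x_j+\pow\pow_1 L_\eps+\pow\pow_2 N_\eps+O(\lambda^3\eps^{3p+3})$ still holds, now with $L_\eps$ carrying a factor $\lambda$ and $N_\eps$ a factor $\lambda^2$. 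The acceleration term becomes $\ddot x_j=-\lambda\eps^{p+2}c_\pow^2\D_x u+2\lambda\nu\eps^{p+q+1}c_\pow\D_\tau u+\lambda\nu^2\eps^{p+2q}\D_\tau^2 v$, and as before the leading $O(\lambda\eps^{p+2})$ piece of $L_\eps$ is $\lambda\eps^{p+2}\zeta_\pow\D_x u$, which cancels the $c_\pow^2$ term. The nonlinear term $N_\eps$, by the argument of Lemma~\ref{l:Neps}, contributes $\lambda^2\eps^{2p+3}(2\zeta_\pow u\D_x u)+o(\lambda^2\eps^{2p+3})$, i.e. $\lambda^2\eps^5\cdot 2\zeta_3 u\D_x u$ at leading order once $p=1$.

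The crux is the next-order behaviour of $L_\eps$ when $\pow=3$. I would work on the Fourier side as in \eqref{e:Lhateps}, where
\[
\widehat{L_\eps}(k)=\lambda\eps^{p+2}\,ik\hat u(k)\sum_{m=1}^\infty\frac{\sinc^2(\eps m k/2)}{m^3}
=\lambda\eps^{p+2}\,ik\hat u(k)\Bigl(\zeta_3-\sum_{m=1}^\infty\frac{1-\sinc^2(\eps m k/2)}{m^3}\Bigr),
\]
and the point is to extract the $\eps\to0$ asymptotics of $S(h):=\sum_{m\ge1}m^{-3}\bigl(1-\sinc^2(mh/2)\bigr)$ as $h\to0$. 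Since $1-\sinc^2(x/2)=\tfrac1{12}x^2+O(x^4)$ near $0$, each term behaves like $\tfrac1{12}h^2 m^{-1}$, so the sum has a logarithmic divergence cut off at $m\sim 1/h$; I expect $S(h)=\tfrac1{12}h^2\log(1/h)+O(h^2)$. The clean way to prove this is via the Mellin transform and contour shifting, exactly the technique invoked for Lemma~\ref{l:Spow} (and referenced to \cite{Flajolet}): the Mellin transform of $1-\sinc^2(x/2)$ has a pole structure that, when paired against $\zeta(s+3)$, produces a double pole at $s=-2$ whose residue yields the $h^2\log(1/h)$ term; alternatively one can split the sum at $m\approx 1/h$ and estimate the two pieces directly by comparison with $\int_1^{1/h} dm/m$. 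Either way, with $h=\eps|k|$ one gets $S(\eps|k|)=\tfrac1{12}\eps^2 k^2\log(1/(\eps|k|))+O(\eps^2 k^2)=\tfrac1{12}\eps^2 k^2\log(1/\eps)(1+o_k(1))$, using $\log(1/(\eps|k|))=\log(1/\eps)-\log|k|$ and $\log(1/\eps)\to\infty$.

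Assembling: $\widehat{L_\eps}(k)=\lambda\eps^{p+2}\zeta_3\,\widehat{\D_x u}(k)-\tfrac1{12}\lambda\eps^{p+4}\log(1/\eps)\,\widehat{\D_x^3 u}(k)(1+o_k(1))$, and after Fourier inversion (controlling the error term by the same Cauchy--Schwarz/dominated-convergence argument as in part~(ii), using the assumed $L^2$ bounds on $\D_x^j v$), $L_\eps=\lambda\eps^{p+2}\zeta_3\D_x u-\tfrac1{12}\lambda\eps^{p+4}\log(1/\eps)\D_x^3 u+o(\lambda\eps^{p+4}\log(1/\eps))$. Recall $\zeta_{\pow-2}=\zeta_1$ "diverges," so the role of $\zeta_{\pow-2}$ in part~(i) is now played by $\log(1/\eps)$; this is the source of the log correction. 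The coefficient $\pow(\pow+1)=12$ when $\pow=3$, so $\pow\pow_1\cdot\tfrac1{12}=1$, giving dispersion coefficient $\kappa_3=1$ as claimed. Now count powers: with $p=1$, the surviving $\D_\tau$ term is $2\lambda\nu c_\pow\eps^{5}\D_\tau u$, the nonlinear term is $\lambda^2\eps^5\pow\pow_2\cdot 2\zeta_3/(\ldots)$—more precisely $\lambda^2\eps^5\kappa_2 u\D_x u$ with $\kappa_2=\pow(\pow+1)(\pow+2)\zeta_3$—and the dispersion term is $\lambda\eps^5\log(1/\eps)\D_x^3 u$. Choosing $\lambda=\nu=\log(1/\eps)$ makes all three of the same order $\lambda^2\eps^5=\eps^5\log^2(1/\eps)$, while the remaining terms ($\D_\tau^2 v$ at order $\lambda^3\eps^{7}$, the $O(\lambda^3\eps^6)$ Taylor remainder, and the $o$-terms) are lower order. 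Hence $R_\eps=\lambda^2\eps^5(Q+o(1))$ with $Q=\kappa_1\D_\tau u+\kappa_2 u\D_x u+\D_x^3 u$, $\kappa_1=2c_\pow$, $\kappa_2=\pow(\pow+1)(\pow+2)\zeta_\pow$. The main obstacle is the careful extraction and error control of the $h^2\log(1/h)$ asymptotics of $S(h)$—in particular verifying that the subleading $O(h^2)$ remainder and the $k$-dependent corrections are genuinely negligible after multiplication by the diverging $\lambda=\log(1/\eps)$; everything else is a bookkeeping exercise parallel to the proof already given.
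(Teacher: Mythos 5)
Your proposal is correct and follows the same overall framework as the paper's Appendix~\ref{a:logkdv}: the same $\lambda,\nu$-decorated decomposition of $R_\eps$, the same treatment of $\ddot x_j$ and $N_\eps$, the Fourier-side identification of the symbol $-i\sgn k\,|k|^3=(ik)^3$ with $\D_x^3$, and the final bookkeeping showing all three terms balance at order $\lambda^2\eps^5$. The one place where you genuinely diverge is the key asymptotic lemma for $S(h)=\sum_m m^{-3}\bigl(1-\sinc^2(mh/2)\bigr)$. The paper writes $S$ (in its normalization) as $\sum_m \tfrac1m f(mh)$ with $f$ from \eqref{d:ffrac}, obtains $S(h)\sim -\tfrac1{24}\log h$ via L'H\^opital's rule applied to $hS'(h)=\sum f'(mh)h\to\int_0^\infty f'=-f(0)$, and then invokes Karamata's theory of regular variation to get the uniformity in $k$ needed for dominated convergence in the Fourier inversion. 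Your route --- splitting the sum at $m\approx 1/h$ (or Mellin inversion with a double pole at $s=-2$, which is indeed the structure, since $\zeta(s+3)$ and $\Gamma(s-2)$ both contribute poles there) --- is more elementary and yields the sharper statement $S(h)=\tfrac1{12}h^2\log(1/h)+O(h^2)$ with an explicit remainder, which makes the subsequent domination argument concrete rather than appealing to regular-variation theory. One small imprecision to fix: your claim $S(\eps|k|)=\tfrac1{12}\eps^2k^2\log(1/\eps)(1+o_k(1))$ is not literally an $o_k(1)$ in the paper's sense (uniformly bounded in $k$), because the correction behaves like $\log(1/|k|)/\log(1/\eps)$ and is unbounded as $k\to0$ for fixed $\eps$; the argument still closes because $|\hat u(k)||k|^3\bigl(1+\log(1/|k|)\bigr)$ is integrable, but you should state the dominating function explicitly rather than rely on the $o_k(1)$ notation. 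You correctly flag this error control as the main remaining work; with that written out, your proof is complete.
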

\begin{proof}
We compute as in the proof of Theorem~\ref{t:longwave} with the following modifications.
Equations~\eqref{d:Reps} and \eqref{d:ddxj} become
\begin{align}
\label{d2:Reps}
R_\eps &= \ddot x_j+ \lambda\pow\pow_1 L_\eps +\lambda^2 \pow\pow_2 N_\eps + O(\eps^{3p+3}{\lambda^3}) \,,
\\
\label{d2:ddxj}
\ddot x_j &= -\eps^{p+2}c_\pow^2 \lambda\D_xu + 2 \eps^{p+q+1}c_\pow  \lambda\nu\D_\tau u + \eps^{p+2q}\nu^2\D_\tau^2 v \,,
\end{align}
while the expressions for $L_\eps$ and $N_\eps$ are unchanged from \eqref{d:Leps} and \eqref{d:Neps}.
The asymptotic expression for $N_\eps$ from Lemma~\ref{l:Neps} holds without change.

When we compute $L_\eps$ as in case (ii), since $(-i\sgn k)|k|^3=(ik)^3$ we find that the expression 
in \eqref{e:Lhateps} takes the form
\begin{equation} \label{e:hatL3}
    \hat L_\eps(k) = \eps^3\zeta_3\,  \widehat{\D_xu}(k) + 2\eps^5 S(\eps|k|)\, \widehat{\D_x^3u}(k)  ,
\end{equation}
where, in terms of the function $f(x) = \frac12(1-\sinc^2(x/2))/x^2$ from \eqref{d:ffrac},
\begin{equation}
S(h) = h \sum_{m=1}^\infty \frac{1-\sinc^2(mh/2)}{2(mh)^3} = \sum_{m=1}^\infty \frac1m f(mh) .
\end{equation}
\begin{lemma}
   $0<S(h)<\zeta_3/h^2$  for all $h>0$, and $S(h)\sim -\frac1{24}\log h$ as $h\to0.$
\end{lemma}
The asymptotic formula follows from L'H\^opital's rule after noting that
\[
hS'(h) = \sum_{m=1}^\infty f'(mh)h \to \int_0^\infty f'(x)\,dx = -f(0)= -\frac1{24}.
\]
From the asymptotic formula it follows $S(h)$ is {\em slowly varying} at 0, meaning 
that as $\eps\to0$, the ratio $S(\eps|k|)/S(\eps)\to1$ for all $k\ne0$.
From Karamata's theory of regular variation \cite{Seneta}, 
this limit is then uniform for $|k|$ in any compact subinterval of $(0,\infty)$,
and the ratio is $o(|k|^{-\beta})$ as $k\to0$ for any $\beta>0$, uniformly for $\eps$ small.

Using these facts in the Fourier inversion formula, 
since $\hat u(k)|k|^r\in L^1$ for all $r<\frac72$
we infer by dominated convergence that as $\eps\to0$,
\begin{equation}
    \int_\R e^{ikx}\widehat{\D_x^3 u}(k) \frac{S(\eps|k|)}{S(\eps)}\,dk \to \D_x^3 u(x,\tau).
\end{equation}
Hence 
\begin{equation}
L_\eps = \eps^3\zeta_3\D_xu - \frac{\eps^5\log \eps}{12}(\D_x^3 u + o(1)).
\end{equation}
Putting this relation into \eqref{d2:Reps} and noting $\pow\pow_1=12$, the Proposition follows.
\end{proof}

\section{KdV limit with alternating forces}\label{a:alt}
\newcommand{\mellin}[1]{\widetilde{#1}}

Here we complete the proof of Lemma~\ref{l:Spow}, which we used in Section~\ref{s:longwave} 
to establish the formal KdV limit for the system \eqref{e:sysalt} in which interaction forces alternately 
repel and attract, as in the experimental setup of Moler\'on {\em et al.}~\cite{MoleronEA2014}.

1. Fix $\pow>0$. The function $S_\pow(h)$ defined in \eqref{d:Spow} satisfies
\begin{equation}\label{e:Spowh}
S_\pow(h) = \sum_{m=1}^\infty (-1)^{m-1}f_\pow(mh) h^{\pow},
\end{equation}
where, in terms of the entire function $f$ defined in \eqref{d:ffrac},
\begin{equation}\label{d:fpow}
f_\pow(x)= \frac{1-\sinc^2(x/2)}{x^\pow} = 2x^{2-\pow}f(x).
\end{equation}
Because $f_\pow$ is eventually monotone decreasing, the alternating series~\eqref{e:Spowh}
converges uniformly on compact subsets of $(0,\infty)$, so $S_\pow$ is continuous.

2. Next we claim that the Mellin transform of $S_\pow$, given by 
\begin{equation}\label{d:mellinSpow}
\mellin{S_\pow}(s) := \int_0^\infty S_\pow(h)h^{s-1}\,ds,
\end{equation}
is well defined whenever $\max(-\pow,-1)<\re s<0$.
To control the convergence of the integral, we pair successive terms in \eqref{e:Spowh}, writing
\begin{equation}\label{e:Sh_odd}
S_\pow(h) = \sum_{m\text{ odd}} \Bigl( f_\pow(mh)-f_\pow(mh+h)\Bigr)h^\pow.
\end{equation}
We establish bounds on the terms of this sum as follows.
First, we find (by Taylor expansion for $0<x<1$) that 
since $x^4 f'(x) = -\sin x + x -4x^3f(x)$, 
\begin{align}
&f(x)\in\left(0,\frac1{4!}\right), \quad %\quad\text{and}\quad
f'(x) \in \left(\frac{-x}{5!},0\right)
\quad \text{for $0<x\le 1$,} \quad
\\ 
& f(x)\in\left(0,\frac1{2x^{2}}\right) , \quad
f'(x)\in\left(\frac{-2}{x^3},\frac2{x^3}\right)\quad
\text{for $x>1$.} \quad 
\end{align}
Since $\frac12 f_\pow(x)=x^{2-\pow}f(x)$ we have 
$\frac12 f_\pow'(x)=(2-\pow)x^{1-\pow}f(x)+x^{2-\pow}f'(x)$, whence it follows
\begin{equation}
|f_\pow'(x)| \le %\lambda_\pow(x) := 
\min(x^{-\pow},x^{-1-\pow})\cdot (6+\pow) =:\lambda_\pow(x).
\end{equation}
The function $\lambda_\pow(x)$ is (chosen to be) decreasing on $(0,\infty)$, ensuring that
\[
|f_\pow(x)-f_\pow(x+h)|\le \lambda_\pow(x)h  \quad\text{for all $x,h>0$}.
\]
Applying this estimate in \eqref{e:Sh_odd}, we note that by 
Tonelli's theorem,
\begin{align*}
  \int_0^\infty|S_\pow(h)| h^{s-1}\,dh
    &\le  %    \int_0^\infty\sum_{m=1}^\infty |f(mh)-f(mh+h)| h^{\pow+s-1}\,dh
    \int_0^\infty\sum_{m=1}^\infty \lambda_\pow(mh) h^{\pow+s}\,dh
    \\ &= \sum_{m=1}^\infty \frac1{m^{\pow+s+1}}\int_0^\infty \lambda_\pow(x)x^{\pow+s}\,dx
    \\ &= \zeta(\pow+s+1) \int_0^\infty C \min(1,x^{-1})x^s\,dx,
\end{align*}
and this is {\em finite}  provided $0<\pow+s$ and $-1<s<0$. 
Thus we find that the Mellin transform of $S_\pow$ is 
well defined as claimed, for $\re s\in(\max(-\pow,-1),0)$.

3. After use of Fubini's theorem and change of variables, we compute that
\begin{align}
\mellin{S_\pow}(s) %&:= \int_0^\infty S_\pow(h)h^{s-1}\,ds \nonumber \\ 
&= \sum_{m\text{ odd}} \int_0^\infty
\Bigl(f_\pow(mh)-f_\pow(mh+h)\Bigr)h^{\pow+s-1}\,dh 
\nonumber \\&=  
\sum_{m=1}^\infty \frac{(-1)^{m-1}}{m^{\pow+s}} \int_0^\infty f_\pow(x) x^{\pow+s-1}\,dx
\nonumber \\&= \zeta^*_{\pow+s}\, \mellin{f_\pow}(\pow+s).
\end{align}
Recall that in Remark~\ref{r:eta} we effectively computed the Mellin transform of $f$. 
From \eqref{d:fpow} and \eqref{e:Is0}, we find that for $-2<\re s<0$,
\[
\mellin{f_\pow}(\pow+s) = 2\mellin{f}(2+s) = 2 \cos\bigl(\tfrac{\pi}2(s-2)\bigr) \Gamma(s-2).
\]
Thus $\mellin{S_\pow}$ extends to a meromorphic function on $\C$ with simple poles at $s=2-2k$ for 
$k=0,1,2,\ldots$, where the residue is 
\begin{equation}
    {\rm Res}(\mellin{S_\pow},2-2k) = 2\zeta^*_{\pow+2-2k}\frac{(-1)^k}{(2k)!} .
\end{equation}
4. Choose $\delta\in(0,\min(1,\pow))$. 
We claim that, for $\sigma\in(-4,-\delta)$ we have
\begin{align}\label{b:Sp}
    |\mellin{S_\pow}(\sigma+it)| \to 0 &\quad\text{as $|t|\to\infty$, uniformly in $\sigma$, and}
    \\ t\mapsto |\mellin{S_\pow}(\sigma+it)| &\quad\text{is integrable on $(-\infty,\infty)$ if $\sigma\ne-2.$}
    \label{b:Spint}
\end{align}
Using the asymptotic formula~\cite[\href{https://dlmf.nist.gov/5.11.E9}{(5.11.9)}]{NIST:DLMF} 
\[
|\Gamma(\sigma+it)| \sim 
    \sqrt{2\pi}|t|^{\sigma-(1/2)}e^{-\pi|t|/2} ,
\]
which is valid uniformly as $t\to\pm\infty$ for $\sigma$ real and bounded,
for $|t|>1$ we get
\begin{equation}\label{e:fpowest}
|\mellin{f_\pow}(\pow+\sigma+it)| \le C |t|^{\sigma-(5/2)}.
\end{equation}
Further, since $|\zeta^*_s|=O(|\zeta_s|)$, 
the zeta-function bounds from \cite[(5.1.1)]{Titchmarsh-zeta} yield  
\begin{align}
    |\zeta^*_{\pow+\sigma+it}| = \begin{cases}
    O(|t|^{(1/2)-\pow-\sigma}) & \text{for }\sigma\le -\delta-\pow ,\\
    O(|t|^{(3/2)+\delta}) & \text{for }\sigma\ge -\delta-\pow .
    \end{cases}
\end{align}
We infer that \eqref{b:Sp}--\eqref{b:Spint} hold, and in particular,
\begin{equation}
|\mellin{S_\pow}(\sigma+it)| = 
\begin{cases}
O(|t|^{-2-\pow}) & \text{for }\sigma\le-\delta-\pow, \\
O(|t|^{-1+\delta+\sigma}) & \text{for }\sigma >-\delta-\pow.
\end{cases}
\end{equation}

5.  By the Mellin inversion theorem (see \cite[Thm. 2]{Flajolet} and \cite[Thm.~8.26]{Folland})
thus we have
\begin{equation}\label{e:iMellin}
S_\pow(h) = \frac1{2\pi i}\lim_{T\to\infty}\int_{c-iT}^{c+iT} 
\mellin{S_\pow}(s)h^{-s}\,ds , 
\quad\text{for $c\in(\max(-\pow,-1),-\delta).$}
\end{equation}
Now, because of the uniform decay in \eqref{b:Sp},
we can deform the path in \eqref{e:iMellin} to move $c$ from the interval stated
to a value $c\in(-4,-2)$, picking up only the residue of the integrand at
the pole $s=-2$ ($k=2$).  Thus we find
\[
S_\pow(h) = \tfrac1{12}\zeta^*_{\pow-2}\,h^2 +  \tilde E(h),
\quad
\tilde E(h) = \frac{h^{-c}}{2\pi} \int_{-\infty}^\infty \mellin{S_\pow}(c+it) h^{-it}\,dt.
\]
Because $t\mapsto |\mellin{S_\pow}(c+it)|$ is integrable for such $c$, we find
$\tilde E(h) = O(h^{-c})$ as $h\to0$. 
With $c=-3$ this yields the desired statement,
and finishes the proof.

\section{B\"acklund transform for the Calogero-Suther\-land system}\label{a:backCS}

For the convenience of the reader,  we verify here that the (trigonometric) B\"acklund 
transform equations \eqref{e:perback1}--\eqref{e:perback2} described by Wojciechowski \cite{Wojciechowski.82} 
imply the Calogero-Moser-Sutherland equations \eqref{e:CSx}--\eqref{e:CSy}
in the case that we use, corresponding to the trigonometric pair potential.

It is efficient to follow the approach suggested for the rational case in \cite{CM-mathoverflow}
and write the equations as one system involving 
variables $x_1,\ldots,x_{N+M}$ and signs $\sigma_1,\ldots,\sigma_{N+M}$ defined by 
\begin{align}
    x_j &= y_{j-N}, \quad j=N+1,\ldots,N+M, 
    \\ \sigma_j &= \begin{cases}
        +1 & j=1,\ldots,N,\\
        -1 & j=N+1,\ldots,M.
    \end{cases}
\end{align}
With these variables, the B\"acklund pair \eqref{e:perback1}--\eqref{e:perback2} 
is written in a unified way as 
\begin{align} \label{e:b_unified}
    i\dot x_j = 
    a \sum_{\substack{k=1\\k\ne j}}^{N+M} \sigma_k \cot a(x_j-x_k) = 
    ia \sum_{\substack{k=1\\k\ne j}}^{N+M} \sigma_k \frac{z_j+z_k}{z_j-z_k} \,,
\end{align}
with  $z_j = e^{2iax_j}$, $j=1,\ldots,N+M$.
Noting $\cot x \sin^{-2}x = \cot x(1+\cot^2 x)$
and using the cotangent difference identity in \eqref{e:cot_ident},
in order to prove that equations \eqref{e:CSx}--\eqref{e:CSy} hold,
our goal is to show that 
\begin{align} \label{e:back_goal}
    \ddot x_j &= 2a^3 \sum_{\substack{k\ne j\\ \sigma_k=\sigma_j}}  
    i\frac{z_j+z_k}{z_j-z_k}
    \left( 1- \left(\frac{z_j+z_k}{z_j-z_k}\right)^2\right)
  \nonumber  \\& = -8ia^3 \sum_{\substack{k\ne j\\ \sigma_k=\sigma_j}}  
    \frac{z_j+z_k}{(z_j-z_k)^3} z_jz_k , \quad j=1,\ldots,N+M.
\end{align}
Toward this end, we note \eqref{e:b_unified} implies
\begin{equation}\label{e:dotzj}
    \dot z_j = 2ia \dot x_j z_j = 2ia^2 z_j \sum_{l\ne j} \sigma_l \frac{z_j+z_l}{z_j-z_l}  \,.
\end{equation}
We then differentiate \eqref{e:b_unified} and use \eqref{e:dotzj} to find
\begin{align}
    \ddot x_j &= a\sum_{k\ne j} \sigma_k 
    \frac{(\dot z_j+\dot z_k)(z_j-z_k)-(z_j+z_k)(\dot z_j-\dot z_k)}
    {(z_j-z_k)^2}
    \\ 
    &= 2a \sum_{k\ne j} \sigma_k
    \frac{z_j\dot z_k - z_k \dot z_j}
    {(z_j-z_k)^2}
   \\  &= 4ia^3 \sum_{k\ne j} \frac{\sigma_k z_jz_k}{(z_j-z_k)^2}
   \left(
   \sum_{l\ne k} \sigma_l \frac{z_k+z_l}{z_k-z_l}
   -\sum_{l\ne j} \sigma_l \frac{z_j+z_l}{z_j-z_l}
   \right ).
\end{align}
Extracting the terms with $l=j$ and $l=k$ from the respective inner sums yields
\begin{align}
    \ddot x_j &= 4ia^3 \sum_{k\ne j} \frac{\sigma_k z_jz_k}{(z_j-z_k)^2}
    \left( \sigma_j \frac{z_k+z_j}{z_k-z_j} - \sigma_k \frac{z_j+z_k}{z_j-z_k} \right)
    \nonumber \\
    & \quad + \, 4ia^3 \sum_{k\ne j} \sum_{l\notin\{j,k\}} \frac{\sigma_k z_jz_k}{(z_j-z_k)^2}
    \cdot\frac{2\sigma_l z_l (z_j-z_k)}{(z_k-z_l)(z_j-z_l)} .
\end{align}
The double sum vanishes because terms cancel in pairs: upon switching $k$ and $l$ the factors
$\sigma_kz_k\sigma_lz_l$ and $(z_j-z_k)(z_j-z_l)$ remain the same, but $(z_k-z_l)$ changes sign.
Therefore upon simplifying, since $\sigma_k^2=1$ we find
\begin{equation}
 \ddot x_j = -4ia^3 \sum_{k\ne j} (\sigma_k\sigma_j+1)\frac{z_j+z_k}{(z_j-z_k)^3}z_jz_k\,.
\end{equation}
This is equivalent to \eqref{e:back_goal}, finishing the proof that 
the (trigonometric) B\"acklund equations~\eqref{e:perback1}--\eqref{e:perback2}
imply the Calogero-Sutherland equations~\eqref{e:CSx}--\eqref{e:CSy}.

\section{Projection method for Calogero-Sutherland systems}\label{a:projection_method}

For completeness, here we provide a brief account of the projection method
of Olshanetsky and Perelomov~\cite{Olshanetsky.Perelomov.81} 
for the solution of Calogero-Sutherland system. 
There is a singularity in this case of the trigonometric potential
that these authors did not explicitly treat, and we also describe a modified 
solution procedure that does not encounter any singularity.

The calculations are rather straightforward if one takes for granted, as we do, 
the result of Moser~\cite{Moser.75} which states that for any solution of the Calogero-Sutherland
equations in \eqref{e:CSx}, the Lax equation
\begin{equation}
    \dot L + [iM,L] = 0
\end{equation}
holds, with time-dependent matrices $L$ and $M$ given by 
\begin{align*}
    L_{jk} &= 
    \delta_{jk}\, \dot x_j + 
    (1-\delta_{jk}) ia \cot a(x_j-x_k) \,,
\\ iM_{jk} &=
    \delta_{jk} \sum_{l\ne j} ia^2\csc^2 a(x_j-x_k)
    - (1-\delta_{jk}) ia^2 \csc^2 a(x_j-x_k)  \,.
\end{align*}
With the (unitary) matrix $U(t)$ determined by solving $\dot U = U(iM)$, $U(0)=I$, 
the matrix $V=ULU\inv$ is constant in time, since $\dot V = U(\dot L + [iM,L])U\inv=0$.
The idea of Olshanetsky and Perelomov is to define a unitary matrix
\begin{equation}\label{d:Y}
    Y(t) = U Z U\inv, \qquad Z = \diag(z_j) = \diag(e^{2iax_j})
\end{equation}
and compute $\dot Y= U(\dot Z + [iM,Z])U\inv$, and further that 
\begin{align*}
    \tfrac12(\dot YY\inv + Y\inv \dot Y) &= 
    U\left(\dot Z Z\inv + \tfrac12(Z\inv(iM)Z - Z(iM)Z\inv)\right)U\inv
    \\ &= U(2ia L)U\inv = 2ia V .
\end{align*}
With $B_{jk}=\delta_{jk}e^{ia x_j(0)}$, the matrix $X(t)=B e^{2iaAt}B$ satisfies
$X(0)=Y(0)$ and  
\[
\tfrac12 (\dot X X\inv + X\inv \dot X) = 2iaV \,,
\]
the same differential equation, {\em provided}
\[
 \tfrac12 (BAB\inv + B\inv AB) =V.
\]
Since $V=L(0)$, this means that $A_{jj}=L_{jj}=\dot x_j(0)$, 
while for $j\ne k$, with  $\zeta_j=e^{iax_j}$ and at $t=0$, 
\[
\tfrac12 A_{jk} (\zeta_j\bar\zeta_k+\bar\zeta_j\zeta_k) =  A_{jk} \frac{z_j+z_k}{2\zeta_j\zeta_k} = L_{jk} = -a \frac{z_j+z_k}{z_j-z_k}\,.
\]
{\em Provided} $z_j+z_k\ne0$ for all $j\ne k$, it follows
\begin{equation}\label{ae:Ajk}
A_{jk} = \frac{-2\zeta_j\zeta_k}{z_j-z_k} = \frac{ia}{\sin a(x_j-x_k)},
\end{equation}
and it follows $X(t)=Y(t)$ for all $t$.
A subtlety here is that $z_j+z_k=0$ is possible and corresponds to a singular set
for the differential equation on the unitary group where $\dot X$ is not determined.  
%\fix{This is correct, since necessarily $\frac12 \dot X_{jk}(\bar z_j+\bar z_k) = 2ia V_{jk}$ for $j\ne k$.}
However, one can approximate by nonsingular data and justify that \eqref{ae:Ajk} yields $Y(t)=X(t)$ in all cases.
The result is that $z_1(t),\ldots,z_N(t)$ are the eigenvalues of $X(t)$ for all $t$.

{\em Modified procedure.}
A modified solution procedure that has no singularities is as follows.
Using \eqref{d:Y}, compute instead that 
\begin{align}
\nonumber
\dot Y Y\inv &= U(\dot Z Z\inv + iM - Z(iM)Z\inv) U\inv
\\ &=  U(2ia(L+K))U\inv =:2ia W,
\end{align}
where $2ia K = iM - \frac12 Z\inv(iM)Z -\frac12 Z(iM)Z\inv$. Evidently $K_{jj}=0$ for all $j$,
while for $j\ne k$,
\begin{align*}
2ia K_{jk} &= i M_{jk}(1-\tfrac12(\bar z_j z_k+z_j\bar z_k))
\\ &= \frac{-ia^2}{\sin^2 a(x_j-x_k)} (1-\cos 2a(x_j-x_k)) = -2ia^2.
\end{align*}
Hence $K = a(I-uu^T)$, where $u_j=1$ for all $j$. Since $\dot K=0$, and because $Mu=0$ and $M$ is Hermitian, 
we have $[iM,K]=0$, and we find
\[
\dot W = U(\dot L + [iM,L] + \dot K + [iM,K])U\inv = 0.
\]
Therefore, since $Y(0) = Z(0)$ and $\dot Y = (2iaW)Y$, it follows 
\begin{equation}
Y(t) = e^{2iaWt}Z(0), 
\end{equation}
where the entries of the constant matrix $W$ are 
\begin{equation}
W_{jk} = L_{jk} + K_{jk} = 
    \delta_{jk}\, \dot x_j + 
    (1-\delta_{jk}) (ia \cot a(x_j-x_k) - a) \,,
\end{equation}
evaluated at $t=0$. The time-dependent values $z_1(t),\ldots,z_N(t)$ are again the eigenvalues of $Y(t)$.
\end{document}